\documentclass[aps,pra,twocolumn,showpacs,showkeys,superscriptaddress,nobalancelastpage,nofootinbib,floatfix,longbibliography]{revtex4-2}

\usepackage{amsmath,amssymb,amsthm}
\usepackage{graphicx}
\usepackage{physics}
\usepackage{enumitem}
\usepackage{xspace}

\usepackage{orcidlink}

\usepackage{url}
\usepackage{hyperref}
\usepackage{color}
\definecolor{refcolor}{RGB}{160,35,0}
\definecolor{hrefcolor}{RGB}{0,35,190}
\hypersetup{
    colorlinks,
    citecolor=refcolor,
    filecolor=refcolor,
    linkcolor=hrefcolor,
    urlcolor=hrefcolor
}

\def\({\left(}
\def\){\right)}
\def\[{\left[}
\def\]{\right]}

\newcommand{\hilbert}{\mathcal{H}}
\newcommand{\pobs}[1]{#1}
\newcommand{\obs}[1]{\mathsf{\pobs{#1}}}

\newcommand{\mc}[1]{\mathcal{#1}}

\newcommand{\tn}[1]{\textnormal{#1}}
\newcommand{\R}{\mathbb{R}}
\newcommand{\C}{\mathbb{C}}
\newcommand{\N}{\mathbb{N}}
\newcommand{\Z}{\mathbb{Z}}
\newcommand{\WdWSpaceL}{{\mkern-8mu}}
\newcommand{\WdWSpaceM}{{\mkern-6mu}}
\newcommand{\WdWSpaceMM}{{\mkern-9mu}}
\newcommand{\WdWSpaceR}{{\mkern-8mu}}
\newcommand{\braa}[1]{\bra{\WdWSpaceL\bra{#1}\WdWSpaceM}}
\newcommand{\kett}[1]{\ket{\WdWSpaceM\ket{#1}\WdWSpaceR}}
\newcommand{\braakett}[2]{\expval{\WdWSpaceL\bra{#1}{\WdWSpaceMM}\ket{#2}{\WdWSpaceR}}}
\newcommand{\brakett}[2]{\left.\bra{#1}{\WdWSpaceMM}\ket{#2}{\WdWSpaceR}\right\rangle}

\newcommand{\ie}{\textit{i.e.}\xspace}
\newcommand{\vs}{\textit{vs.}\xspace}
\newcommand{\eg}{\textit{eg.}\xspace}

\newcommand{\etal}{\textit{et al.}\xspace}

\newcommand{\sref}[1]{\S\ref{#1}}

\newtheorem{theorem}{Theorem}
\newtheorem{definition}{Definition}
\newtheorem{proposition}{Proposition}
\newtheorem{corollary}{Corollary}
\newtheorem{hypothesis}{Hypothesis}
\newtheorem{example}{Example}

\theoremstyle{remark}
\newtheorem{remark}{Remark}

\newtheoremstyle{nospace} 
  {3pt}                   
  {3pt}                   
  {\itshape}              
  {}                      
  {\bfseries}             
  {.}                     
  {.5em}                  
  {\thmname{#1}\thmnumber{#2}\thmnote{ (#3)}} 

\theoremstyle{nospace}

\newtheorem{defCustom}{}
\makeatletter
\newcommand{\setdefCustomtag}[1]{
  \let\oldthedefCustom\thedefCustom
  \renewcommand{\thedefCustom}{{\normalfont\textbf{#1}}}
  \g@addto@macro\enddefCustom{
    \global\let\thedefCustom\oldthedefCustom}
  }
\makeatother

\allowdisplaybreaks

\begin{document}

\title{The clock ambiguity problem: extended or extinguished?}
\author{\orcidlink{0000-0002-2765-1562}\ O.C. Stoica}
\affiliation{Dept. Th. Physics, NIPNE-HH, Bucharest, Romania. \href{mailto:cristi.stoica@theory.nipne.ro}{cristi.stoica@theory.nipne.ro},  \href{mailto:holotronix@gmail.com}{holotronix@gmail.com}}

\keywords{Page-Wootters formalism; clock ambiguity; tensor product structure; underdetermination; no-go.}
\begin{abstract}
I show that the clock ambiguity cannot be solved by a purely relational condition like the noninteraction condition, and it is even stronger, extending to evolution laws. The ambiguity is solved by specifying the physical meaning of observables.

Page and Wootters (1983) showed how time and dynamics can emerge from entanglement within a stationary quantum system containing a clock. The clock ambiguity problem is that, from a purely relational stance and without fixing a clock-world split, the emergence is ambiguous, resulting in any possible history (Albrecht 1995).

I show that the ambiguity is stronger than previously recognized. Under the relational stance, it extends from histories to the evolution laws themselves. The spectrum of any ideal clock uniformizes the spectra of the world's evolution operators, leaving only the dimension of the Hilbert spaces as invariant information.

Fixing the clock-world split can solve the ambiguity, but this would block spacetime symmetries.

One might want to remove the ambiguity up to a unitary equivalence by imposing noninteraction, as in Marletto and Vedral (2017). But once the clock spectrum uniformizes the world spectrum, unitary equivalence becomes too coarse to distinguish any two possible world dynamics, which is the result proved here. Thus a purely relational condition such as noninteraction is insufficient.

Nor can all different decompositions be regarded as equally valid perspectives, since then records would not be correlated with the events they record, and empirical knowledge would be impossible. The resolution is therefore not to embrace the ambiguity, but to recognize what a bare reading of the Page-Wootters structure omits: the physical meaning of the operators.

\end{abstract}

\maketitle

\begin{quote}
\textit{A man with a watch knows what time it is.\\A man with two watches is never sure.}
\end{quote}
\begin{flushright}
Segal's Law~\cite{Bloch2003MurphysLaw}
\end{flushright}

\section{Introduction}
\label{s:intro}

The \emph{problem of time} appeared in canonical quantum gravity, where the Wheeler-DeWitt constraint equation implies that the universe is stationary~\cite{Dewitt1967QuantumTheoryOfGravityI_TheCanonicalTheory,Dewitt1967QuantumTheoryOfGravityII_TheManifestlyCovariantTheory},
\begin{equation}
\label{eq:PW-WDW}
\obs{H}\kett{\Psi}=0.
\end{equation}

This problem is more general, since just like the long range of the electromagnetic interactions requires the existence of a charge superselection rule~\cite{WickWightmanWigner1952TheIntrinsicParityOfElementaryParticles}, gravitational coupling requires a similar superselection rule for energy. From this observation, Page and Wootters~\cite{PageWootters1983EvolutionWithoutEvolution} concluded that
\emph{``only operators that commute with the Hamiltonian can be observables''}, and immediately asked \emph{``But such operators are stationary, so how is it that we observe time dependence in the world?''}.

The stationarity of observables implies that observations cannot distinguish the state of the universe from a stationary state, so we need to understand how it is possible to observe changes in a world satisfying, for a general state $\rho$, the stationary equation $[\rho,\obs{H}]=0$.

The \emph{Page-Wootters proposal} is that time emerges from the stationary state $\kett{\Psi}$, if we assume that it contains a clock subsystem entangled with the rest of the world so that $\kett{\Psi}=\int_\R\ket{\tau}_c\ket{\psi(\tau)}_r\dd \tau$~\cite{PageWootters1983EvolutionWithoutEvolution,Wootters1984TimeReplacedByQuantumCorrelations,Page1986DensityMatrixOfTheUniverse,Page1989TimeAsAnInaccessibleObservable,Page1994ClockTimeAndEntropy}.
Then, for each clock state $\ket{\tau}_c$, there is a corresponding world state $\ket{\psi(\tau)}_r$, even though the total entangled state is stationary and contains all moments of time. To find the value of an observable at a given time, one has to condition on the value indicated by the clock, interpreted as time.

Due to the explanatory power of the superselection rules and their inevitability as a logical consequence of quantum theory, the problem of time is much more general than it may seem.
This should make the Page-Wootters proposal to be of much wider interest.

However, the Page-Wootters proposal was challenged by Kucha\v{r}'s criticism~\cite{Kuchar1992TimeAndInterpretationsOfQuantumGravity}, in particular for the \emph{multiple choice problem}, and for the more general \emph{clock ambiguity problem} noticed by Albrecht~\cite{Albrecht1995TheoryOfEeverythingVsTheoryOfAnything} and developed together with Iglesias~\cite{AlbrechtIglesias2008ClockAmbiguityAndTheEmergenceOfPhysicalLaws,AlbrechtIglesias2012ClockAmbiguityImplicationsNewDevelopments}.

Albrecht and Iglesias~\cite{Albrecht1995TheoryOfEeverythingVsTheoryOfAnything,AlbrechtIglesias2008ClockAmbiguityAndTheEmergenceOfPhysicalLaws,AlbrechtIglesias2012ClockAmbiguityImplicationsNewDevelopments} noticed that if we simply start with a timeless system obeying equation~\eqref{eq:PW-WDW} and try to recover the state and its time evolution $\ket{\psi(\tau)}_r$, we can get all possible states following all possible Schr\"odinger dynamics or even having a random time evolution, depending on the choice of the clock's degrees of freedom.

The very title of Albrecht's article~\cite{Albrecht1995TheoryOfEeverythingVsTheoryOfAnything}, \emph{``The theory of everything vs the theory of anything''}, seems to signal a worry about the predictive power of such a theory.
For this reason, despite the hope of Albrecht and Iglesias to gain \emph{``crucial properties of the physical world such as space, locality, gravity, gauge symmetries, and cosmology as emergent and approximate''}, maybe even a theory of everything, from as little as the equation~$\obs{H}\kett{\Psi}=0$ and physical reasoning, this ambiguity was seen as a problem of the Page-Wootters proposal.

Marletto and Vedral gave an important response to this problem, showing that, under certain conditions, the ambiguity disappears up to an equivalence once one requires an ideal clock to be noninteracting with the rest of the world~\cite{MarlettoVedral2017EvolutionWithoutEvolutionAndWithoutAmbiguities}.

The central point of this article is not to show a problem in the Page-Wootters framework, but in its reading as a bare structure from which an unambiguous description of physical reality can emerge. I show that, by taking the physical meanings of the observables into account, the clock ambiguity problem vanishes, but by adopting a purely relational stance, not only we obtain the original clock ambiguity problem, but even a stronger version, which extends from the impossibility to distinguish histories from random histories in the bare Page-Wootters framework, to the impossibility to distinguish even different evolution laws.

In the argument below, I temporarily adopt the relational stance in order to push it to its logical, but ultimately unacceptable, consequences. I then show that, if we do what quantum mechanics ordinarily does, namely keep track of the physical meanings of the observables, the Page-Wootters proposal recovers a determinate physical description.

In Section~\ref{s:reviews-two-clock-ambiguities} I will distinguish between the original ambiguity, which is between histories following any evolution, and a stronger ambiguity, which can't distinguish between the evolution laws in addition to histories.

In Section~\ref{s:ambiguity} I extend and strengthen Albrecht and Iglesias' result to the stronger kind of ambiguity. 
This stronger form of ambiguity is closest to the issue addressed by Marletto and Vedral. I show that, for ideal clocks, noninteraction by itself does not select a unique rest Hamiltonian: the ambiguity extends to Hamiltonians and becomes maximal within the bare PW structure.
The technical reason is that the spectrum of the clock's Hamiltonian washes out any information contained in the spectrum of the world's Hamiltonian.
The ambiguity shows that all we can get from a system satisfying $\obs{H}\kett{\Psi}=0$ containing an ideal clock that is not singled out among other possible choices is the dimension of the world's Hilbert space.
This applies to both discrete and continuous ideal clocks.

In Section~\ref{s:reviews-Marletto-Vedral} I give my explanation of the apparent tension between the solution of the clock ambiguity problem offered by Marletto and Vedral and the results proved here. I track the explanation to the spectral-washing effect that ideal clocks have, which unites the equivalence classes of world Hamiltonians into a single equivalence class, leading to the maximal ambiguity proved here.

In Section~\ref{s:physical-meaning} I show that, once we keep track of which operator represents each observable physical property, the ambiguity is reduced to the freedom allowed and required by physical symmetries. For example, the position and momentum are represented by specific operators, and not by any other operators with the same spectral properties and satisfying the same commutation relations, and even providing the same form of the Hamiltonian.
Taking this into account reduces the ambiguity under the limits allowed and required by spacetime and other physical symmetries.

In Section~\ref{s:physical-ambiguity} I argue that some ambiguity is not only harmless but necessary: completely removing it, for example by fixing the clock-world decomposition too rigidly, would introduce an absolute time and block relativity of simultaneity, diffeomorphism invariance, and even Galilean relativity.
But maximal ambiguity is not good either, as it removes any chance of predictive or explanatory power, since all possible laws and histories would manifest in the same way.
In particular, this leaves no room for emergence programs like the one Albrecht and Iglesias hoped to construct using the freedom provided by the clock ambiguity.

One may still hope that we can avoid assigning predefined meanings as physical properties to the operators, as this may seem to be redundant baggage, and that these physical meanings emerge instead from the spectral properties and the mutual relations between the operators (Section~\ref{s:ambiguity-origin}).
In Section~\ref{s:no-accept-ambiguity} I examine this hypothesis and show that if this were true, it would make it impossible even to record the results of experiments or to maintain memories about events, making both life and empirical science impossible.

\section{The Page-Wootters formalism for continuous and discrete time}
\label{s:reviews-Page-Wootters}

Just like in the case of positions~\cite{AharonovKaufherr1984QuantumFramesOfReference,BartlettRudolphSpekkens2007ReferenceFramesSuperselectionRulesAndQuantumInformation}, we never observe changes with respect to an absolute time, the only change we observe is by comparing relative changes between systems.
But somehow, from this web of relations, an apparently universal time seems to emerge. Page and Wootters showed that this can be modeled by the existence of a subsystem playing the role of a clock, so that the total state space $\hilbert$ can be divided in a clock subsystem $\hilbert_c$ and a ``rest of the world'' subsystem $\hilbert_r$,
\begin{equation}
\label{eq:hilbert-cr}
\hilbert=\hilbert_c\otimes\hilbert_r.
\end{equation}

Then, the succession of states of the world in time can be decoded from a timeless entanglement between the world and the clock~\cite{PageWootters1983EvolutionWithoutEvolution,Page1986DensityMatrixOfTheUniverse,Page1994ClockTimeAndEntropy,Wootters1984TimeReplacedByQuantumCorrelations}, as I will describe now.

\begin{definition}
\label{def:clock}
An \emph{ideal clock} is a system with states $\ket{\tau}_c$ and transitions $\obs{U}_c(\tau)$, $\tau\in\R$, so that
\begin{equation}
\label{eq:clock-evol}
\ket{\tau'}_c=\obs{U}_c(\tau-\tau')\ket{\tau}_c.
\end{equation}

An ideal clock can be represented formally by a quantum system with Hilbert space $\hilbert_c\cong L^2(\R)$ and generalized clock-reading vectors $\ket{\tau}_c$ satisfying
\begin{equation}
\label{eq:t-basis}
\braket{\tau}{\tau'}_c=\delta(\tau-\tau')
\end{equation}
and a Hamiltonian operator
\begin{equation}
\label{eq:clock-H}
\obs{H}_c=-i\hbar\pdv{\tau}
\end{equation}
so that $\obs{U}_c(\tau)=e^{-\frac{i}{\hbar}\obs{H}_c\tau}$.
The \emph{time operator}, defined as $\obs{T}_c:=\int_{\R}\tau\dyad{\tau}_c\dd \tau$, is canonically conjugate to $\obs{H}_c$, \ie
\begin{equation}
\label{eq:clock-time-operator}
[\obs{T}_c,\obs{H}_c]=i\hbar.
\end{equation}

The ideal clock readings are represented by the generalized eigenvectors
$\ket{\tau}_c$ of $\obs{T}_c$. In the idealized PW construction, the
relative states are obtained by conditioning on these clock readings.
\end{definition}

Often less ideal kinds of clocks are considered, for example clocks that interact weakly with the rest of the world were already mentioned in~\cite{PageWootters1983EvolutionWithoutEvolution}, and also clocks with a finite dimensional Hilbert space $\hilbert_c$ and discrete time~\cite{Albrecht1995TheoryOfEeverythingVsTheoryOfAnything}, and even with unsharp time values~\cite{Busch2002Time-energy-uncertainty-relation,LoveridgeMiyadera2019RelativeQuantumTime,SmithAhmadi2019QuantizingTimeInteractingClocksAndSystems,HohnSmithLock2021TrinityOfRelationalQuantumDynamics}.

In the following we will write informally $\ket{\tau}_c\in\hilbert_c$ and call it an eigenvector of $\obs{T}_c$ (with $\obs{T}_c\ket{\tau}_c=\tau\ket{\tau}_c$) even though $\ket{\tau}_c$ is not normalizable and so it does not belong to $\hilbert_c$, but to its rigged Hilbert space, so it is a generalized eigenvector~\cite{GelfandVilenkin1964GeneralizedFunctionsAApplicationsOfHarmonicAnalysis} satisfying condition~\eqref{eq:t-basis}.

\begin{definition}
\label{def:PW-system}
A \emph{Page-Wootters (PW) system} consists of an ideal clock and a ``rest of the world'' quantum system $\hilbert_r$, so that the two systems do not interact,
\begin{equation}
\label{eq:PW-noninteracting}
\obs{H}=\obs{H}_c\otimes\obs{I}_r+\obs{I}_c\otimes\obs{H}_r.
\end{equation}

Its \emph{relative temporal states} have the form
\begin{equation}
\label{eq:PW-state-temporal}
\ket{\Psi(\tau)}=\ket{\tau}_c\ket{\psi(\tau)}_r,
\end{equation}
where $\ket{\tau}_c$ is a generalized eigenvector of $\obs{T}_c$ and $\ket{\psi(\tau)}_r\in\hilbert_r$.
We interpret a state $\ket{\tau}_c\ket{\psi(\tau)}_r$ as evolving into $\ket{\tau'}_c\ket{\psi(\tau')}_r$ if $\ket{\tau'}_c=\obs{U}_c(\tau'-\tau)\ket{\tau}_c$ and
\begin{equation}
\label{eq:PW-evol}
\ket{\psi(\tau')}_r=e^{-\frac{i}{\hbar}\obs{H}_r(\tau'-\tau)}\ket{\psi(\tau)}_r,
\end{equation}
\ie the \emph{history} $\tau\mapsto\ket{\psi(\tau)}_r$ is compatible with the Hamiltonian $\obs{H}_r$.
The allowed \emph{timeless states} of the PW system have the form
\begin{equation}
\label{eq:PW-state-timeless}
\kett{\Psi}=\int_\R\ket{\tau}_c\ket{\psi(\tau)}_r\dd \tau,
\end{equation}
where the integral is taken over the history of a temporal state.
Due to their invariance to time translations, timeless states satisfy the Wheeler-DeWitt type equation~\eqref{eq:PW-WDW}.
\end{definition}

Page and Wootters proposed that the universe is stationary and in a timeless state $\kett{\Psi}$ from equation~\eqref{eq:PW-state-timeless}, which is an entangled state.
The state of the rest of the world relative to the clock's state $\ket{\tau}_c$ is
\begin{equation}
\label{eq:relative-state}
\ket{\tau}_c\mapsto\ket{\psi(\tau)}_r,
\end{equation}
and it can be understood as the world's state at time $\tau$.

If $\kett{\Psi}$ from equation~\eqref{eq:PW-state-timeless} satisfies~equation~\eqref{eq:PW-WDW}, by applying it to the Hamiltonian from equation~\eqref{eq:PW-noninteracting} we can obtain the Schr\"odinger equation for $\ket{\psi(\tau)}_r$~\cite{Wootters1984TimeReplacedByQuantumCorrelations},
\begin{equation}
\label{eq:PW-schrod-deriv}
\begin{aligned}
i\hbar\dv{\tau}\ket{\psi(\tau)}_r
&=i\hbar\dv{\tau}\bra{\tau'}{e^{\frac{i}{\hbar}\obs{H}_c(\tau-\tau')}\otimes\obs{I}_r}\kett{\Psi} \\
&=-\bra{\tau}{\obs{H}_c\otimes\obs{I}_r}\kett{\Psi} \\
&=\bra{\tau}{\obs{I}_c\otimes\obs{H}_r-\obs{H}}\kett{\Psi} \\
&=\obs{H}_r\ket{\psi(\tau)}.
\end{aligned}
\end{equation}

The \emph{conditional expectation value} of a stationary observable $\obs{A}=\obs{I}_c\otimes\obs{A}_r$ given that the clock reads $\tau$ is
\begin{equation}
\label{eq:exp-val}
E(\obs{A}|\tau)=\frac{\tr(\obs{A}\obs{P}_\tau\rho)}{\tr\obs{P}_\tau\rho},
\end{equation}
where $\obs{P}_\tau:=\dyad{\tau}\otimes\obs{I}_r$.
Then, the conditional expectation value of $\obs{A}$ evolves (in the Heisenberg and in the Schr\"odinger pictures) into~\cite{PageWootters1983EvolutionWithoutEvolution}
\begin{equation}
\label{eq:exp-val-t}
\begin{array}{ll}
E(\obs{A}|\tau)
&=\tr_r\big(\underbrace{e^{\frac{i}{\hbar}\obs{H}_r\tau}\obs{A}e^{-\frac{i}{\hbar}\obs{H}_r\tau}}_{\text{Heisenberg picture}}\rho_r(0)\big) \\
&=\tr_r\big(\obs{A}\underbrace{e^{-\frac{i}{\hbar}\obs{H}_r\tau}\rho_r(0) e^{\frac{i}{\hbar}\obs{H}_r\tau}}_{\text{Schr\"odinger picture}}\big) \\
\end{array}
\end{equation}
where $\rho_r(0):=\displaystyle{\frac{\tr_c\obs{P}_0\rho}{\tr\obs{P}_0\rho}}$.
The Schr\"odinger evolution of the density operator of the rest of the world $\rho_r(\tau)$ is then
\begin{equation}
\label{eq:PW-schrod-rho}
\rho_r(\tau)=e^{-\frac{i}{\hbar}\obs{H}_r\tau}\rho_r e^{\frac{i}{\hbar}\obs{H}_r\tau},
\end{equation}
consistent with equation~\eqref{eq:PW-schrod-deriv} for $\rho_r(\tau)=\dyad{\psi(\tau)}_r$.

The Page-Wootters proposal also works with finite or infinite discrete clocks. Then, $\tau\in\Z_n$, where $n$ is either a positive integer, or infinite, when $\Z_\infty:=\Z$.

The dynamics of a clock with discrete time $\tau\in\Z_n$ can be described by a \emph{shift operator}, a unitary operator $\obs{U}_c$, so that $\obs{U}_c\ket{\tau}_c=\ket{\tau+1\mod n}_c$. As for any unitary operator, one may write $\obs{U}_c=e^{-\frac{i}{\hbar}\obs{H}_c}$ for some $\obs{H}_c$, but the Hamiltonian description will not be needed in the discrete case.

\begin{definition}
\label{def:clock-discrete}
A \emph{discrete ideal clock} is a system with states $\ket{\tau}_c$, $\tau\in\Z_n$, and transitions $\obs{U}_c$, so that
\begin{equation}
\label{eq:evol-discrete-c}
\ket{\tau+1\mod n}_c=\obs{U}_c\ket{\tau}_c.
\end{equation}

We can see the clock as a quantum system with Hilbert space $\hilbert_c$ with orthonormal basis $(\ket{\tau}_c)_{\tau\in\Z_n}$.
The operator $\obs{U}_c$ extends by linearity to a unitary operator on $\hilbert_c$ which will be denoted by $\obs{U}_c$ as well.
We will call the operator $\obs{T}_c:=\sum_{\tau\in\Z}\tau\dyad{\tau}_c$ \emph{time operator}.
\end{definition}

To characterize a finite-time PW system, it is important to notice that the stationarity condition requires that when the clock resets after $n$ ticks, the state of the world system must reset to its initial state as well. If it does not do this, the same clock state $\ket{\tau}_c$ corresponds to multiple world states $\ket{\psi(\tau+kn)}_r$, for all $k\in\Z$, and the PW proposal does not work.
Therefore, stationarity implies, for finite $n$, the cyclicity condition
\begin{equation}
\label{eq:psi-cyclic}
\ket{\psi(\tau+kn)}_r=\ket{\psi(\tau)}_r.
\end{equation}

If this is required for all state vectors from $\hilbert_r$, it leads to the condition that the world's unitary evolution operator is cyclic like the clock's,
\begin{equation}
\label{eq:U-cyclic}
\obs{U}_r^n=\obs{I}_r.
\end{equation}

\begin{definition}
\label{def:PW-system-discrete}
An \emph{ideal discrete Page-Wootters system} consists of a discrete ideal clock and a ``rest of the world'' quantum system with Hilbert space $\hilbert_r$ and discrete time evolution governed by an operator $\obs{U}_r$ on $\hilbert_r$,
\begin{equation}
\label{eq:evol-discrete-r}
\ket{\psi(\tau+1 \mod n)}_r=\obs{U}_r\ket{\psi(\tau)}_r.
\end{equation}

The combined system evolves governed by $\obs{U}=\obs{U}_c\otimes\obs{U}_r$,
\begin{equation}
\label{eq:evol-discrete-cr}
\ket{\tau+1 \mod n}_c\ket{\psi(\tau+1 \mod n)}_r=\obs{U}_c\ket{\tau}_c\obs{U}_r\ket{\psi(\tau)}_r.
\end{equation}

Its \emph{relative temporal states} have the form $\ket{\Psi(\tau)}=\ket{\tau}_c\ket{\psi(\tau)}_r$.
Its allowed \emph{timeless states} have the form
\begin{equation}
\label{eq:PW-state-timeless-discrete}
\kett{\Psi}=\sum_{\tau\in\Z_n}\ket{\tau}_c\ket{\psi(\tau)}_r.
\end{equation}
Due to their invariance to time translations, timeless states satisfy 
\begin{equation}
\label{eq:PW-WDW-discrete}
\obs{U}\kett{\Psi}=\kett{\Psi},
\end{equation}
which is a discrete version of $\obs{H}\kett{\Psi}=0$.
\end{definition}

Based on this mechanism, Page and Wootters answered affirmatively a question they posed in~\cite{PageWootters1983EvolutionWithoutEvolution},
\begin{quote}
whether this sort of evolution (evolution by correlations) always imitates evolution as obtained from the equation of motion.
\end{quote}

Their proposal was developed in various directions~\cite{GiovannettiLloydMaccone2015QuantumTime,LoveridgeMiyadera2019RelativeQuantumTime,MacconeSacha2020QuantumMeasurementsOfTime,CastroruizGiacominiBelenchiaBrukner2020QuantumClocksAndTemporalLocalisabilityOfEventsInThePresenceOfGravitatingQuantumSystems,FotiEtAl2021TimeAndClassicalEqOfMotionFromQEntanglementViaPageWoottersGeneralizedCoherentStates,HohnSmithLock2021TrinityOfRelationalQuantumDynamics,Gambini2022SolutionProblemOfTimeQuantumGravityAlsoTimeArrivalProblemQM,AltaieBeigeHodgson2022TimeAndQuantumClocksAReviewOfRecentDevelopments,Adlam2022WatchingTheClocksInterpretingPageWoottersFormalismAndInternalQRF,Rijavec2022HeisenbergPictureEvolutionWithoutEvolution,SuleymanovCohen2023QuantumFramesOfReferenceAndRelationalFlowOfTime,RidleyCohen2025TwoTimesOrNone}, and even illustrated in experiments~\cite{MorevaBridaGramegnaGiovannettiMacconeGenovese2014TimeFromQuantumEntanglementAnExperimentalIllustration,MorevaGramegnaBridaMacconeGenovese2017QuantumTimeExperimentalMultitimeCorrelations}.

However, the uniqueness of the reconstruction of dynamics from the timeless picture from the PW proposal will be challenged.

\section{Albrecht-Iglesias clock ambiguity}
\label{s:reviews-Albrecht-Iglesias}

Shortly after the PW proposal it became clear that, in the context of quantum gravity, there is a \emph{multiple choice problem}: different choices of ``internal time'' or clock variable can lead, after reduction and quantization, to inequivalent Schr\"odinger equations, and hence different quantum theories~\cite{ADM1962TheDynamicsOfGeneralRelativity,Kuchar1992TimeAndInterpretationsOfQuantumGravity,Isham1993CanonicalQuantumGravityAndTheProblemOfTime,Anderson2012ProblemOfTimeInQuantumGravity,Anderson2017TheProblemOfTime}.
This raises problems regarding the invariance of the method of quantization and the background freedom of quantum gravity.

The possibility of different choices of the internal clock in the PW proposal also leads, more generally, to the clock ambiguity problem~\cite{Albrecht1995TheoryOfEeverythingVsTheoryOfAnything,AlbrechtIglesias2008ClockAmbiguityAndTheEmergenceOfPhysicalLaws,AlbrechtIglesias2012ClockAmbiguityImplicationsNewDevelopments}.
In their proof of the ambiguity, Albrecht and Iglesias assume discrete time, so that $\hilbert_c$ has a finite orthonormal basis $(\ket{\tau_j}_c)_{j}$, but their proof also applies to an infinite countable basis.
Let $(\ket{k}_r)_{k}$ be an orthonormal basis of $\hilbert_r$.
Consider the history explicitly encoded in $\kett{\Psi}=\sum_j \ket{\tau_j}_c\ket{\psi(\tau_j)}_r$, and a randomly chosen history $\ket{\psi'(\tau)}_r$, which may follow a Schr\"odinger evolution based on a randomly chosen Hamiltonian or even be a random walk, encoded in the vector $\kett{\Psi'}=\sum_j \ket{\tau_j}_c\ket{\psi'(\tau_j)}_r$.
Then, there is a unitary transformation $\obs{M}$ of $\hilbert$ so that $\obs{M}\kett{\Psi}=\kett{\Psi'}$.
Let $\kett{\eta_{j,k}}:=\obs{M}^\dagger\ket{\tau_j}_c\ket{k}_r$.
In the orthogonal basis $(\kett{\eta_{j,k}})_{j,k}$ of $\hilbert$, the components of $\kett{\Psi}$ are equal to those of $\kett{\Psi'}$ in the basis $(\ket{\tau_j}_c\ket{k}_r)_{j,k}$,
\begin{equation}
\label{eq:AI-change-basis}
\begin{array}{ll}
\braakett{\eta_{j,k}}{\Psi}
&=\(\bra{\tau_j}\bra{k}\obs{M}\)\kett{\Psi}
=\bra{\tau_j}\bra{k}\(\obs{M}\kett{\Psi}\)\\
&=\bra{\tau_j}\brakett{k}{\Psi'}=\braket{k}{\psi'(\tau_j)}_r.\\
\end{array}
\end{equation}

We now construct another tensor product decomposition $\hilbert=\hilbert_{c'}\otimes\hilbert_{r'}$, determined by the condition that each basis vector $\kett{\eta_{j,k}}$ is a product state $\kett{\eta_{j,k}}=\ket{\tau_j}_{c'}\ket{k}_{r'}$.
By construction, $\kett{\Psi}$ has, in the basis $(\kett{\eta_{j,k}})_{j,k}$, the same components any other vector $\kett{\Psi'}$ has in the original basis $(\ket{\tau_j}_c\ket{k}_r)_{j,k}$, $\kett{\Psi}=\sum_j \ket{\tau_j}_{c'}\ket{\psi'(\tau_j)}_{r'}$.
Since $\kett{\Psi'}=\sum_j \ket{\tau_j}_c\ket{\psi'(\tau_j)}_r$ can be chosen freely, it can encode any history $\ket{\psi'(\tau_j)}_{r}$, and therefore so does $\kett{\Psi}$ in the new tensor product decomposition $\hilbert=\hilbert_{c'}\otimes\hilbert_{r'}$.
The history $\ket{\psi'(\tau_j)}_r$ can be chosen to follow any Hamiltonian $\obs{H}_{r'}$, or even to be a completely random history.
Albrecht concluded in~\cite{Albrecht1995TheoryOfEeverythingVsTheoryOfAnything} that starting with $\kett{\Psi}$
\begin{quote}
one can produce all possible states evolving according to all possible time evolutions simply by choosing a suitable clock.
\end{quote}

\begin{remark}[The pullback-TPS construction.]
\label{rem:pullback-TPS}
The Albrecht-Iglesias proof uses a step that can be understood as a general construction. Let $\hilbert$ be a Hilbert space with a reference tensor product structure
\begin{equation}
\mc{U}:\hilbert_c\otimes\hilbert_r\to\hilbert
\end{equation}
and let $\obs{M}$ be a unitary operator on $\hilbert$. Then $\obs{M}$ can be used passively to define a new tensor product structure by
\begin{equation}
\mc{U}':=\obs{M}^{-1}\mc{U}.
\end{equation}
Equivalently, if $(\ket{\tau_j}_c)_j$ and $(\ket{k}_r)_k$ are orthonormal bases of $\hilbert_c$ and $\hilbert_r$, the product vectors of the new TPS are 
\begin{equation}
\ket{\tau_j}_{c'}\ket{k}_{r'} := \obs{M}^{-1}\left(\ket{\tau_j}_c\ket{k}_r\right).
\end{equation}

Thus, a vector $\kett{\Psi}\in\hilbert$ has, in the primed TPS, the
same components that $\obs{M}\kett{\Psi}$ has in the unprimed TPS:
\begin{equation}
\begin{aligned}
{}_{c'}\bra{\tau_j}{}_{r'}\brakett{k}{\Psi}
&={}_c\bra{\tau_j}{}_r\bra{k}\obs{M}\kett{\Psi}.
\end{aligned}
\end{equation}

This is the sense in which an active unitary transformation of the state
can be reread passively as a change of tensor product structure.
\end{remark}

\section{Two kinds of clock ambiguities}
\label{s:reviews-two-clock-ambiguities}

Albrecht and Iglesias' clock ambiguity establishes that, by changing the TPS, a fixed timeless state can be made to encode arbitrary histories. However, it does not establish the unitary equivalence of the Hamiltonians themselves. 
Let us spell out the difference between the two kinds of ambiguities.

A tensor product structure (TPS) is defined by a unitary isomorphism of Hilbert spaces~\cite{ZanardiLidarLloyd2004QuantumTensorProductStructuresAreObservableInduced}
\begin{equation}
\label{eq:TPS}
\mc{U}:\hilbert_c\otimes\hilbert_r\to\hilbert.
\end{equation}

Another unitary isomorphism
\begin{equation}
\label{eq:TPS-other}
\mc{U}':\hilbert_{c'}\otimes\hilbert_{r'}\to\hilbert
\end{equation}
defines a possibly different TPS.
The TPS given by $\mc{U}'$ is equivalent to the TPS given by $\mc{U}$ if and only if $\mc{U}^{'-1}\mc{U}=\mc{U}_c\otimes\mc{U}_r$ for some unitary maps $\mc{U}_c:\hilbert_c\to\hilbert_{c'}$ and $\mc{U}_r:\hilbert_r\to\hilbert_{r'}$, possibly swapped.
In the following we will assume $\hilbert=\hilbert_c\otimes\hilbert_r$, so $\mc{U}=\obs{I}_{c+r}$.

\begin{definition}
\label{def:ambiguity}
Consider a PW system with $\hilbert=\hilbert_c\otimes\hilbert_r$, $\obs{H}=\obs{H}_c\otimes\obs{I}_r+\obs{I}_c\otimes\obs{H}_r$, and $\kett{\Psi}=\int_\R\ket{\tau}_c\ket{\psi(\tau)}_r\dd \tau$ compatible with the evolution governed by $\obs{H}_c$.
\begin{enumerate}
	\item 
We say that the PW system has the \emph{clock ambiguity problem for histories} if, for any other state vector $\kett{\Psi'}=\int_\R\ket{\tau}_c\ket{\psi'(\tau)}_r\dd \tau$ compatible with the evolution governed by another Hamiltonian $\obs{H}_c'$, there is a TPS $\mc{U}':\hilbert_{c'}\otimes\hilbert_{r'}\to\hilbert$ so that
\begin{equation}
\label{eq:clock-ambiguity-histories}
\kett{\Psi}=\mc{U}'\int_\R\ket{\tau}_{c'}\ket{\psi'(\tau)}_{r'}\dd \tau.
\end{equation}
	\item 
We say that the PW system has the \emph{clock ambiguity problem for histories and laws} if it has the clock ambiguity problem for histories and, in addition,
\begin{equation}
\label{eq:clock-ambiguity-laws}
\obs{H}_c\otimes\obs{I}_r+\obs{I}_c\otimes\obs{H}_r=\mc{U}'\(\obs{H}_{c'}\otimes\obs{I}_{r'}+\obs{I}_{c'}\otimes\obs{H}_{r'}\)\mc{U}^{'-1}.
\end{equation}
\end{enumerate}
\end{definition}

An equivalent way to express the two versions of the clock ambiguity from Definition~\ref{def:ambiguity} is as unitary equivalence of any two PW systems with $\dim\hilbert_c=\dim\hilbert_{c'}$ and $\dim\hilbert_r=\dim\hilbert_{r'}$. Different choices of the clock subsystem are related by a unitary transformation, which changes the TPS $\mc{U}:\hilbert_c\otimes\hilbert_r\to\hilbert$ into a possibly different TPS $\mc{U}':\hilbert_{c'}\otimes\hilbert_{r'}\to\hilbert$.
Then, the map
\begin{equation}
\label{eq:ambiguity-equivalent}
\begin{array}{l}
\obs{S}:\hilbert_c\otimes\hilbert_r\to\hilbert_{c'}\otimes\hilbert_{r'}\\
\obs{S}=\mc{U}^{'-1} \mc{U}\\
\end{array}
\end{equation}
is a unitary equivalence between the two PW systems $\hilbert_c\otimes\hilbert_r$ and $\hilbert_{c'}\otimes\hilbert_{r'}$.
If such a map $\obs{S}$ exists mapping any history in the first PW system into a history in the second PW system, the two PW systems have the clock ambiguity problem for histories. If $\obs{S}$ also satisfies condition~\eqref{eq:clock-ambiguity-laws}, they have the clock ambiguity problem for histories and laws.

With Definition~\ref{def:ambiguity}, Albrecht's result~\cite{Albrecht1995TheoryOfEeverythingVsTheoryOfAnything}, proved in~\sref{s:reviews-Albrecht-Iglesias}, becomes
\begin{proposition}
\label{thm:Albrecht-Iglesias-clock-ambiguity}
Any finite-dimensional PW system has the clock ambiguity problem for histories.
\end{proposition}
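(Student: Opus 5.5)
The plan is to reproduce the Albrecht--Iglesias argument of \sref{s:reviews-Albrecht-Iglesias} in the language of Definition~\ref{def:ambiguity}, using the pullback-TPS construction of Remark~\ref{rem:pullback-TPS}. In the finite-dimensional setting the clock is a discrete ideal clock with basis $(\ket{\tau_j}_c)_{j\in\Z_n}$, so the integrals in Definition~\ref{def:ambiguity} are read as the sums of equation~\eqref{eq:PW-state-timeless-discrete}. Fix the given timeless state $\kett{\Psi}=\sum_j\ket{\tau_j}_c\ket{\psi(\tau_j)}_r$ and an arbitrary target $\kett{\Psi'}=\sum_j\ket{\tau_j}_c\ket{\psi'(\tau_j)}_r$, with the history $\psi'$ compatible with some other dynamics (or with none at all). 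The goal is a TPS $\mc{U}'$ satisfying equation~\eqref{eq:clock-ambiguity-histories}.

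The first step is to check that the two states have the same norm: $\|\kett{\Psi}\|^2=\sum_j\|\psi(\tau_j)\|^2$, and likewise for $\Psi'$. Each history is generated by a unitary evolution, so this equals $n\|\psi(0)\|^2$. If the norms differ, I will rescale $\psi'$. This is harmless, because both the relative states and the conditional expectations~\eqref{eq:exp-val} are insensitive to overall normalization. Alternatively, one can adopt the convention that timeless states are normalized.

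The second step uses the fact that for two vectors of equal norm in a finite-dimensional Hilbert space there is a unitary $\obs{M}$ on $\hilbert$ with $\obs{M}\kett{\Psi}=\kett{\Psi'}$. For example, complete $\kett{\Psi}/\|\Psi\|$ and $\kett{\Psi'}/\|\Psi'\|$ to orthonormal bases and map one basis onto the other. The third step sets $\mc{U}':=\obs{M}^{-1}$, identifying $\hilbert_{c'}\otimes\hilbert_{r'}$ with $\hilbert_c\otimes\hilbert_r$ as abstract spaces; this is exactly Remark~\ref{rem:pullback-TPS} with $\mc{U}=\obs{I}$. Then
\[
\mc{U}'\sum_j\ket{\tau_j}_{c'}\ket{\psi'(\tau_j)}_{r'}=\obs{M}^{-1}\kett{\Psi'}=\kett{\Psi},
\]
which is equation~\eqref{eq:clock-ambiguity-histories}. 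Equivalently, the component computation~\eqref{eq:AI-change-basis} shows that $\kett{\Psi}$ has, in the basis $\kett{\eta_{j,k}}=\obs{M}^\dagger\ket{\tau_j}_c\ket{k}_r$, the components $\braket{k}{\psi'(\tau_j)}_r$.

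The main obstacle is not the construction itself but making sure the conclusion matches the definition. Several points need checking. The new clock factor $\hilbert_{c'}$ must again be a discrete ideal clock, which holds by transporting $\obs{U}_c$ via $\mc{U}'$. The target history $\psi'$ must satisfy the cyclicity condition~\eqref{eq:psi-cyclic} if it is to count as a PW history, which only restricts the choice of $\psi'$ and not the construction. Finally, the normalization issue must be handled. I would also remark that $\obs{M}$ is unconstrained on the orthogonal complement of $\kett{\Psi}$. This freedom is why the construction says nothing about the total Hamiltonian, and hence why only the ambiguity for histories, not for laws, is obtained here.
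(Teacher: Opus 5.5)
Your proposal is correct and follows the paper's own argument: the paper also takes a unitary $\obs{M}$ with $\obs{M}\kett{\Psi}=\kett{\Psi'}$ and declares the vectors $\kett{\eta_{j,k}}=\obs{M}^\dagger\ket{\tau_j}_c\ket{k}_r$ to be the product basis of the new TPS, which is exactly your $\mc{U}'=\obs{M}^{-1}$ via Remark~\ref{rem:pullback-TPS}. Your explicit check that the two timeless states have equal norm, which is automatic for norm-preserving histories starting from normalized initial states, fills in a detail the paper leaves tacit.
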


The ambiguity proved by Albrecht and Iglesias concerns histories, but not evolution laws: their proof establishes condition~\eqref{eq:clock-ambiguity-histories}, not condition~\eqref{eq:clock-ambiguity-laws}.

At first sight, condition~\eqref{eq:clock-ambiguity-laws} may seem impossible to satisfy for any pair of Hamiltonians $\obs{H}_r$ and $\obs{H}_r'$, because their spectra may be different.
But we will see that any difference in the spectra is washed out by the spectrum of $\obs{H}_c$.

We will see that, for ideal clocks with finite, discrete, or continuous  time, the stronger version of the clock ambiguity holds. Even if we require that there is no interaction between the clock and the rest of the world, any ideal clock PW system has the clock ambiguity problem for both histories and laws. This ambiguity is maximal, in the sense that the clock's Hamiltonian removes any difference between any two Hamiltonians $\obs{H}_r$ and $\obs{H}_{r'}$ and any two histories $\ket{\psi(\tau)}_r$ and $\ket{\psi'(\tau)}_r$.

\section{Maximal clock ambiguity for histories and laws}
\label{s:ambiguity}

I now prove the stronger ambiguity for ideal clocks, which extends Albrecht and Iglesias' ambiguity so that
\begin{enumerate}
	\item 
it includes, besides unitary equivalences of histories, unitary equivalences of the Hamiltonians,
	\item 
	the ambiguity is maximal even under Marletto and Vedral's assumption that the clock and the rest of the world do not interact.
\end{enumerate}

For this, I assume all Hilbert spaces separable.
For clarity, I will first prove the discrete time case, treating in the Appendix the more technical cases.

\subsection{Maximal ambiguity for finite time}
\label{s:ambiguity-discrete-time-finite}

Let us start with the simplest example.
\begin{example}
\label{example:two-qubits}
Consider the two-qubit PW system:
\begin{equation}
\label{eq:qubit-example-U}
\begin{cases}
\obs{U}=\obs{U}_c\otimes\obs{U}_r=\sigma_x\otimes\sigma_x, \\
\obs{T}_c=\dyad{1}_c=(\obs{I}_2-\sigma_z)/2, \\
\kett{\Psi}=\frac{1}{\sqrt{2}}\(\ket{0}_c\ket{1}_r+\ket{1}_c\ket{0}_r\).
\end{cases}
\end{equation}
Here $(\ket{0},\ket{1})$ is the $\sigma_z$ eigenbasis. In this basis, the operator $\sigma_z$ is the clock operator and $\sigma_x$ is the
shift operator.

Consider another two-qubit discrete PW system:
\begin{equation}
\label{eq:qubit-example-U-b}
\begin{cases}
\obs{U}'=\obs{U}_{c'}\otimes\obs{U}_{r'}=\sigma_x\otimes\obs{I}_2, \\
\obs{T}_{c'}=\dyad{1}_{c'}=(\obs{I}_2-\sigma_z)/2, \\
\kett{\Psi'}=\frac{1}{\sqrt{2}}(\ket{0}_{c'}\ket{1}_{r'}+\ket{1}_{c'}\ket{1}_{r'}).
\end{cases}
\end{equation}

The two PW systems follow distinct dynamical laws for the rest of the world. However, they are unitarily equivalent as PW systems with histories and laws.
\end{example}
\begin{proof}
Define $\obs{S}:\hilbert_c\otimes\hilbert_r\to\hilbert_{c'}\otimes\hilbert_{r'}$ on the product basis so that it maps two pairs of histories into one another,
\begin{equation}
\label{eq:qubit-example-S-basis}
\begin{aligned}
\obs{S}\ket{0}_c\ket{1}_r & = \ket{0}_{c'}\ket{1}_{r'}, \\
\obs{S}\ket{1}_c\ket{0}_r & = \ket{1}_{c'}\ket{1}_{r'}, \\
\obs{S}\ket{0}_c\ket{0}_r & = \ket{0}_{c'}\ket{0}_{r'}, \\
\obs{S}\ket{1}_c\ket{1}_r & = \ket{1}_{c'}\ket{0}_{r'}.
\end{aligned}
\end{equation}

This gives
\begin{equation}
\label{eq:qubit-example-S}
\begin{aligned}
\obs{S}
=&\ket{0}_{c'}\bra{0}_c\otimes
  \(\ket{0}_{r'}\bra{0}_r+\ket{1}_{r'}\bra{1}_r\) \\
+&\ket{1}_{c'}\bra{1}_c\otimes
  \(\ket{0}_{r'}\bra{1}_r+\ket{1}_{r'}\bra{0}_r\).
\end{aligned}
\end{equation}

This has the form of a controlled-not (or, more generally, a controlled-shift) on the rest of the world. It is a unitary operator, and its inverse has the same form,
\begin{equation}
\label{eq:qubit-example-S-inv}
\begin{aligned}
\obs{S}^{-1}
=&\ket{0}_{c}\bra{0}_{c'}\otimes
  \(\ket{0}_{r}\bra{0}_{r'}+\ket{1}_{r}\bra{1}_{r'}\) \\
+&\ket{1}_{c}\bra{1}_{c'}\otimes
  \(\ket{0}_{r}\bra{1}_{r'}+\ket{1}_{r}\bra{0}_{r'}\).
\end{aligned}
\end{equation}

A direct calculation gives
\begin{equation}
\label{eq:qubit-example-S-U}
\obs{S}(\sigma_x\otimes\sigma_x)\obs{S}^{-1}=\sigma_x\otimes\obs{I}_2.
\end{equation}

By construction, from equation~\eqref{eq:qubit-example-S-basis},
\begin{equation}
\label{eq:qubit-example-state-map}
\obs{S}\kett{\Psi}=\kett{\Psi'}.
\end{equation}

Therefore, the histories and the laws are unitarily equivalent, even though 
$\obs{U}_r=\sigma_x$ and $\obs{U}_{r'}=\obs{I}_2$ are distinct and have different spectra.

Thus both evolution laws are noninteracting, but the ambiguity remains.
\end{proof}

Example~\ref{example:two-qubits} is a result about finite cyclic PW systems, whose primitive dynamical datum is the step operator $\obs{U}$, not a preferred logarithm $\obs{H}$.
A Hamiltonian logarithm of $\obs{U}$ may be introduced, but it is extra branch data not fixed by the finite cyclic PW system.
A discrete PW system does not have access to a continuum of instantaneous states, so the discrete time unitary operator $\obs{U}$ is the suitable formalism.

Note that the Hamiltonians
\begin{equation}
\label{eq:qubit-example-H}
\obs{H}=\sigma_x\otimes\obs{I}_2+\obs{I}_2\otimes\sigma_x
\end{equation}
and
\begin{equation}
\label{eq:qubit-example-Hb}
\obs{H}'=\sigma_x\otimes\obs{I}_2
\end{equation}
are, obviously, not unitarily equivalent. They have different spectra,
\begin{equation}
\label{eq:qubit-example-H-spectra}
\begin{array}{ll}
\varsigma(\obs{H})&=\(-2,0,0,2\)\quad\text{and} \\
\varsigma(\obs{H}')&=\(-1,-1,1,1\).\\
\end{array}
\end{equation}

But the spectra of the unitary operators $\obs{U}$ and $\obs{U}'$ are identical,
\begin{equation}
\label{eq:qubit-example-U-spectra}
\begin{array}{ll}
\varsigma(\obs{U})&=\(-1,-1,1,1\)\quad\text{and} \\
\varsigma(\obs{U}')&=\(-1,-1,1,1\).\\
\end{array}
\end{equation}

This may seem like a discrepancy: if $\obs{S}$ transforms $\obs{U}$ into $\obs{U}'$, does this mean that it must transform $\obs{H}$ into $\obs{H}'$?
But this is impossible, because $\obs{H}$ and $\obs{H}'$ have different spectra. What happens here is that $\obs{S}$ transforms $\obs{H}$ into a Hamiltonian generator of $\obs{U}'$ other than $\obs{H}'$.
The reason is that a unitary operator has infinitely many logarithms.

\begin{remark}
\label{rem:UvH}
The fact that there are more Hamiltonian generators for the same one-step unitary evolution operator $\obs{U}$ may give the impression that, by dropping the Hamiltonian and using only $\obs{U}$, we lose essential information, which is the cause of the ambiguity from Example~\ref{example:two-qubits}.
But this information that singles out a single Hamiltonian generator for $\obs{U}$ was not there to begin with: the history of the discrete PW system has only two states, connected by $\obs{U}$, not a continuum of states connected infinitesimally by the Schr\"odinger equation for $\obs{H}$.
\end{remark}

This may seem like an exceptional situation, but it is the general rule.
The reason is that the spectrum of $\obs{U}_{c}$ is $\varsigma\(\obs{U}_c\)=\{-1,1\}$, and since $\obs{U}_r^2=\obs{I}_2$, its spectrum must be included in $\{-1,1\}$. This is the case for both $\obs{U}_r=\sigma_x$ and $\obs{U}_{r'}=\obs{I}_2$. When combining their spectra, we get the spectra from equation~\eqref{eq:qubit-example-U-spectra} regardless of $\obs{U}_{r'}$, which makes the existence of the transformation $\obs{S}$ possible.

In the continuous case, the same happens, but for Hamiltonians too, as we will see in~\sref{s:ambiguity-continuous-time}.

For finite time with $n$ ticks, let the clock Hilbert space be
$\hilbert_c\cong\C^n$, with the standard time basis
$(\ket{\tau}_c)_{\tau\in\Z_n}$. The standard clock and shift
operators are
\begin{equation}
\label{eq:finite-clock-shift}
\begin{cases}
\obs{Z}_n\ket{\tau}_c=\omega_n^\tau\ket{\tau}_c,\\
\obs{X}_n\ket{\tau}_c=\ket{\tau+1}_c, \\
\end{cases}
\end{equation}
where $\omega_n=e^{2\pi i/n}$. They satisfy
\begin{equation}
\label{eq:finite-clock-shift-rel}
\obs{X}_n^n=\obs{Z}_n^n=\obs{I}_c,
\qquad
\obs{Z}_n\obs{X}_n=\omega_n\obs{X}_n\obs{Z}_n.
\end{equation}

The time operator is
\begin{equation}
\label{eq:finite-time-operator}
\obs{T}_c=\sum_{\tau\in\Z_n}\tau\dyad{\tau}_c.
\end{equation}

The clock transition is governed by the shift operator $\obs{U}_c=\obs{X}_n$.

The clock and shift operators are generalizations of the Pauli matrices, so for $n=2$ they are
\begin{equation}
\label{eq:finite-qubit-clock-shift}
\obs{Z}_2=\sigma_z,
\qquad
\obs{X}_2=\sigma_x.
\end{equation}

The total unitary evolution of the finite cyclic PW system has the form
\begin{equation}
\label{eq:evol-discrete-finite}
\obs{U}=\obs{X}_n\otimes\obs{U}_r.
\end{equation}

Due to the stationarity condition for the discrete case, $\obs{U}\kett{\Psi}=\kett{\Psi}$, both the clock and the world return to their initial states after $\tau=n$ ticks. Therefore,
\begin{equation}
\label{eq:evol-discrete-loop}
\obs{U}_r^n=\obs{I}_r
\qquad
\text{and}
\qquad
\obs{U}^n=\obs{I}_{c+r}.
\end{equation}

Example~\ref{example:two-qubits} generalizes easily to any finite clock dimension $n$.

\begin{theorem}[Maximal ambiguity, finite time]
\label{thm:maximal-ambiguity-finite-cyclic}
Let two finite PW systems have the same clock dimension $n$ and
$\dim\hilbert_r=\dim\hilbert_{r'}$. Let their one-step evolution operators
be
\begin{equation}
\label{eq:finite-cyclic-two-systems}
\obs{U}=\obs{X}_n\otimes\obs{U}_r,
\qquad
\obs{U}'=\obs{X}_n'\otimes\obs{U}_{r'},
\end{equation}
where $\obs{U}_r^n=\obs{I}_r$ and $\obs{U}_{r'}^n=\obs{I}_{r'}$. Then, for
any two temporal states $\ket{\Psi(0)}=\ket{0}_c\ket{\psi(0)}_r$ and
$\ket{\Psi'(0)}=\ket{0}_{c'}\ket{\psi'(0)}_{r'}$, there is a unitary operator
$\obs{S}:\hilbert_c\otimes\hilbert_r\to\hilbert_{c'}\otimes\hilbert_{r'}$
such that, for all $\tau\in\Z_n$,
\begin{equation}
\label{eq:PW-unitarily-equivalent-state-finite}
\ket{\Psi'(\tau)}=\obs{S}\ket{\Psi(\tau)},
\end{equation}
and
\begin{equation}
\label{eq:PW-unitarily-equivalent-U-finite}
\obs{U}'=\obs{S}\obs{U}\obs{S}^{-1}.
\end{equation}
\end{theorem}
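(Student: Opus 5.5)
The plan is to generalize the controlled-shift construction of Example~\ref{example:two-qubits}. First I will untwist each system into a canonical form in which the world is frozen. For the unprimed system, define the controlled unitary
\begin{equation*}
\obs{V}:=\sum_{\tau\in\Z_n}\dyad{\tau}_c\otimes\obs{U}_r^{-\tau},
\end{equation*}
which is well defined on $\Z_n$ because $\obs{U}_r^n=\obs{I}_r$. A direct computation on product basis vectors gives
\begin{equation*}
\obs{V}(\obs{X}_n\otimes\obs{U}_r)\obs{V}^{-1}=\obs{X}_n\otimes\obs{I}_r.
\end{equation*}
Indeed, $\obs{V}^{-1}$ applies $\obs{U}_r^{\tau}$ when the clock reads $\tau$. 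The evolution $\obs{X}_n\otimes\obs{U}_r$ then moves the clock to $\tau+1$ and applies one more $\obs{U}_r$. Finally, $\obs{V}$ removes $\obs{U}_r^{\tau+1}$. The cyclicity $\obs{U}_r^n=\obs{I}_r$ is exactly what makes the wrap-around from $\tau=n-1$ to $\tau=0$ consistent.

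Under $\obs{V}$, the temporal states $\ket{\Psi(\tau)}=\ket{\tau}_c\obs{U}_r^\tau\ket{\psi(0)}_r$ become $\ket{\tau}_c\ket{\psi(0)}_r$. The same construction gives $\obs{V}'$ for the primed system, which maps $\obs{X}_n'\otimes\obs{U}_{r'}$ to $\obs{X}_n'\otimes\obs{I}_{r'}$ and maps $\ket{\Psi'(\tau)}$ to $\ket{\tau}_{c'}\ket{\psi'(0)}_{r'}$.

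Second, I connect the two frozen systems by a product unitary $\obs{W}=\obs{J}_c\otimes\obs{R}$. Here $\obs{J}_c\ket{\tau}_c=\ket{\tau}_{c'}$ identifies the clock bases, so $\obs{J}_c\obs{X}_n\obs{J}_c^{-1}=\obs{X}_n'$. The map $\obs{R}:\hilbert_r\to\hilbert_{r'}$ is any unitary with $\obs{R}\ket{\psi(0)}_r=\ket{\psi'(0)}_{r'}$. Such an $\obs{R}$ exists because the dimensions agree and the states are normalized: extend each vector to an orthonormal basis and map the bases onto each other. Since $\obs{W}$ is a product, it conjugates $\obs{X}_n\otimes\obs{I}_r$ to $\obs{X}_n'\otimes\obs{I}_{r'}$ and sends $\ket{\tau}_c\ket{\psi(0)}_r$ to $\ket{\tau}_{c'}\ket{\psi'(0)}_{r'}$ for every $\tau$.

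Finally, I set $\obs{S}:=\obs{V}'^{-1}\obs{W}\obs{V}$. Composing the three conjugations gives equation~\eqref{eq:PW-unitarily-equivalent-U-finite}. Tracking the temporal states through the three maps gives equation~\eqref{eq:PW-unitarily-equivalent-state-finite} for all $\tau$.

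Checking the untwisting identity, in particular its wrap-around case, is the only step requiring care. There the hypothesis $\obs{U}_r^n=\obs{I}_r$ is essential: without it, $\obs{V}$ would depend on the chosen representative of $\tau$ modulo $n$. Everything else is routine. For $n=2$ with $\psi(0)=\ket{1}$, $\psi'(0)=\ket{1}$ and $\obs{U}_{r'}=\obs{I}_2$, this $\obs{S}$ reduces to the controlled-shift of equation~\eqref{eq:qubit-example-S}, which is a useful sanity check.
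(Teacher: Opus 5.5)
Your proposal is correct and is essentially the paper's proof. Your $\obs{S}=(\obs{V}')^{-1}(\obs{J}_c\otimes\obs{R})\obs{V}$ sends $\ket{\tau}_c\obs{U}_r^\tau\ket{e_k}_r$ to $\ket{\tau}_{c'}\obs{U}_{r'}^\tau\obs{R}\ket{e_k}_r$, which is exactly the paper's basis map $\obs{S}\ket{\tau,k}=\ket{\tau,k}'$ with $\ket{v_k}_{r'}=\obs{R}\ket{e_k}_r$. The paper defines this map directly on the adapted bases and checks intertwining there, while you factor it as untwist, product identification, retwist, the discrete analogue of the paper's continuous-case formula $e^{-\frac{i}{\hbar}\obs{T}_{c'}\otimes\obs{H}_{r'}}(\obs{W}_c\otimes\obs{W}_r)e^{\frac{i}{\hbar}\obs{T}_c\otimes\obs{H}_r}$.
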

\begin{proof}
Choose two orthonormal bases, $(\ket{e_k}_r)_k$ of $\hilbert_r$, and
$(\ket{v_k}_{r'})_k$ of $\hilbert_{r'}$, so that $\ket{e_1}_r=\ket{\psi(0)}_r$ and $\ket{v_1}_{r'}=\ket{\psi'(0)}_{r'}$.
For $\tau\in\Z_n$ define
\begin{equation}
\label{eq:finite-cyclic-basis}
\begin{array}{ll}
\ket{\tau,k} &:=\ket{\tau}_c\obs{U}_r^\tau\ket{e_k}_r,\quad\text{and}\\
\ket{\tau,k}' &:=\ket{\tau}_{c'}\obs{U}_{r'}^\tau\ket{v_k}_{r'}.\\
\end{array}
\end{equation}
Because the clock states $\ket{\tau}_c$ are mutually orthogonal, both
families in equation~\eqref{eq:finite-cyclic-basis} are orthonormal bases
of their total Hilbert spaces. Define $\obs{S}$ by
\begin{equation}
\label{eq:finite-cyclic-S}
\obs{S}\ket{\tau,k}=\ket{\tau,k}'.
\end{equation}
Then $\obs{S}$ is unitary. Moreover,
\begin{equation}
\label{eq:finite-cyclic-intertwining}
\begin{aligned}
\obs{S}\obs{U}\ket{\tau,k}
&=\obs{S}\ket{\tau+1 \mod n,k} \\
&=\ket{\tau+1 \mod n,k}' \\
&=\obs{U}'\ket{\tau,k}' \\
&=\obs{U}'\obs{S}\ket{\tau,k},
\end{aligned}
\end{equation}
so $\obs{S}\obs{U}=\obs{U}'\obs{S}$, which is equivalent to
\eqref{eq:PW-unitarily-equivalent-U-finite}. For $k=1$, equation
\eqref{eq:finite-cyclic-S} gives
\begin{equation}
\obs{S}\ket{\tau}_c\obs{U}_r^\tau\ket{\psi(0)}_r
=\ket{\tau}_{c'}\obs{U}_{r'}^\tau\ket{\psi'(0)}_{r'},
\end{equation}
which is exactly equation~\eqref{eq:PW-unitarily-equivalent-state-finite}.
\end{proof}

The proof of Theorem~\ref{thm:maximal-ambiguity-finite-cyclic} also suggests that we can take $n=\infty$, but we need to replace $\obs{H}_c=X_n$ with the bilateral shift on $\ell^2(\Z)$, whose spectrum is the full unit circle $S^1=e^{-\frac{i}{\hbar}\R}$ instead of the $n$ $n$-th roots of unity $\omega_n=e^{2\pi i/n}$. The same multiplicative spectral-washing argument applies. We prove this case in Appendix~\ref{s:ambiguity-discrete-time-infinite}.

\subsection{Maximal ambiguity for continuous time}
\label{s:ambiguity-continuous-time}

The continuous case $\tau\in\R$ as in Definition~\ref{def:PW-system} is similar to the discrete case, but the spectral-washing argument is applied to the Hamiltonians. Since $\obs{H}_c=-i\hbar\partial_\tau$, $\varsigma(\obs{H}_c)=\R$, and one obtains
\begin{equation}
\label{eq:spectral-eating-H}
    \varsigma(\obs{H})=\varsigma(\obs{H}_c)+\varsigma(\obs{H}_r)=\R+\varsigma(\obs{H}_r)=\R.
\end{equation}

\begin{proposition}
\label{thm:continuous-zero}
For any PW system as in Definition~\ref{def:PW-system}, there is another TPS $\mc{U}':\hilbert_{c'}\otimes\hilbert_{r'}\to\hilbert$ in which the PW system is expressed as $\obs{H}=\obs{H}_{c'}\otimes\obs{I}_{r'}$, that is, $\obs{H}_{r'}=0$.
\end{proposition}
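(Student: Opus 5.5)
The plan is to construct an explicit unitary $\obs{S}:\hilbert_c\otimes\hilbert_r\to\hilbert_{c}\otimes\hilbert_{r}$ that conjugates $\obs{H}=\obs{H}_c\otimes\obs{I}_r+\obs{I}_c\otimes\obs{H}_r$ into $\obs{H}_c\otimes\obs{I}_r$. The new TPS is then $\mc{U}':=\obs{S}^{-1}$, read passively as in Remark~\ref{rem:pullback-TPS}. Under $\mc{U}'$ the Hamiltonian is $\obs{H}_{c'}\otimes\obs{I}_{r'}$ with $\obs{H}_{c'}=-i\hbar\partial_\tau$, so $\obs{H}_{r'}=0$. The natural candidate is the continuous analogue of the controlled-shift in Example~\ref{example:two-qubits} and of the basis $\ket{\tau,k}=\ket{\tau}_c\obs{U}_r^\tau\ket{e_k}_r$ in the proof of Theorem~\ref{thm:maximal-ambiguity-finite-cyclic}. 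I would take the controlled evolution
\begin{equation*}
\obs{S}:=\int_\R \dyad{\tau}_c\otimes e^{\frac{i}{\hbar}\obs{H}_r\tau}\,\dd\tau
= e^{\frac{i}{\hbar}\obs{T}_c\otimes\obs{H}_r},
\end{equation*}
which is unitary because it is the exponential of $i$ times a self-adjoint operator. Equivalently, it is a direct integral of unitaries over the clock reading.

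The key computation is $\obs{S}\obs{H}\obs{S}^{-1}$. First, $\obs{S}$ commutes with $\obs{I}_c\otimes\obs{H}_r$. Second, since $[\obs{T}_c,\obs{H}_c]=i\hbar$, the operator $\obs{H}_c$ acts as a generator of translations of $\obs{T}_c$. Hence
\begin{equation*}
\obs{S}(\obs{H}_c\otimes\obs{I}_r)\obs{S}^{-1}=\obs{H}_c\otimes\obs{I}_r-\obs{I}_c\otimes\obs{H}_r .
\end{equation*}
This follows from differentiating $e^{\frac{i}{\hbar}\tau\obs{H}_r}$ in $\tau$. It is the same calculation as equation~\eqref{eq:PW-schrod-deriv}, where applying $-i\hbar\partial_\tau$ to $e^{-\frac{i}{\hbar}\obs{H}_r\tau}\ket{\psi(0)}_r$ produces $\obs{H}_r$. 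Adding the two pieces gives $\obs{S}\obs{H}\obs{S}^{-1}=\obs{H}_c\otimes\obs{I}_r$, as required.

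As a consistency check, I would verify that $\obs{S}$ maps the timeless state $\kett{\Psi}=\int\ket{\tau}_c e^{-\frac{i}{\hbar}\obs{H}_r\tau}\ket{\psi(0)}_r\dd\tau$ to $\int\ket{\tau}_c\ket{\psi(0)}_r\dd\tau$. That is the frozen history compatible with $\obs{H}_{r'}=0$, so the PW structure, and not just the Hamiltonian, is preserved. Spectrally, this is the washing effect of equation~\eqref{eq:spectral-eating-H}. In the momentum representation of the clock, $\obs{S}$ is the translation $E\mapsto E+\obs{H}_r$ of the clock energy variable, which absorbs $\obs{H}_r$ because $\varsigma(\obs{H}_c)=\R$ is translation invariant.

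The main obstacle is rigor, since $\obs{H}_c$, $\obs{T}_c$ and possibly $\obs{H}_r$ are unbounded. The conjugation identity must hold as an equality of self-adjoint operators, including domains, and not only formally. I would handle this first on the level of the unitary groups. Check
\begin{equation*}
\obs{S}\,e^{-\frac{i}{\hbar}\obs{H}s}\,\obs{S}^{-1}=e^{-\frac{i}{\hbar}(\obs{H}_c\otimes\obs{I}_r)s}
\end{equation*}
for all $s$. Then apply Stone's theorem, which gives equality of the generators together with their domains. To check the group identity, I would diagonalize $\obs{H}_r$ via the spectral theorem, writing $\hilbert_r\cong\int^\oplus\hilbert_\lambda\,\dd\mu(\lambda)$. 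On each fibre $L^2(\R)\otimes\hilbert_\lambda$, $\obs{S}$ is multiplication by $e^{i\lambda\tau/\hbar}$, which conjugates translation by $s$ into translation by $s$ times the phase $e^{-i\lambda s/\hbar}$, and this phase cancels the factor $e^{-i\lambda s/\hbar}$ coming from $\obs{H}_r$. Separability of $\hilbert_r$ guarantees this decomposition is available.
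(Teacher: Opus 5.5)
Your proposal is correct and is essentially the paper's own proof. It uses the same unitary $\obs{S}=e^{\frac{i}{\hbar}\obs{T}_c\otimes\obs{H}_r}$, the same two conjugation identities giving $\obs{S}\obs{H}\obs{S}^{-1}=\obs{H}_c\otimes\obs{I}_r$, and the same pullback-TPS step $\mc{U}'=\obs{S}^{-1}$, while your unitary-group and Stone's-theorem check adds domain-level rigor that the paper leaves implicit. One small sign slip: on each fibre, conjugating the clock translation by $\obs{S}$ produces the phase $e^{+i\lambda s/\hbar}$, not $e^{-i\lambda s/\hbar}$, and it is $e^{+i\lambda s/\hbar}$ that cancels the $e^{-i\lambda s/\hbar}$ coming from $\obs{H}_r$.
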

\begin{proof}
Let $\mc{U}:\hilbert_c\otimes\hilbert_r\to\hilbert$ be the original TPS. Define
\begin{equation}
\label{eq:PW-trivial-rest-S}
\obs{S}=e^{\frac{i}{\hbar}\obs{T}_c\otimes\obs{H}_r}.
\end{equation}
Using $[\obs{T}_c,\obs{H}_c]=i\hbar$, we get
\begin{equation}
\label{eq:PW-trivial-rest-Hc-conjugated}
\obs{S}(\obs{H}_c\otimes\obs{I}_r)\obs{S}^{-1}
=\obs{H}_c\otimes\obs{I}_r-\obs{I}_c\otimes\obs{H}_r,
\end{equation}
and
\begin{equation}
\label{eq:PW-trivial-rest-Hr-conjugated}
\obs{S}(\obs{I}_c\otimes\obs{H}_r)\obs{S}^{-1}
=\obs{I}_c\otimes\obs{H}_r.
\end{equation}
Therefore
\begin{equation}
\label{eq:PW-trivial-rest-H-conjugated}
\obs{S}\obs{H}\obs{S}^{-1}=\obs{H}_c\otimes\obs{I}_r.
\end{equation}

Now apply the pullback-TPS construction (see Remark~\ref{rem:pullback-TPS}) to $\obs{S}$ to define a new TPS by
\begin{equation}
\label{eq:PW-trivial-rest-new-TPS-def}
\mc{U}':=\mc{U}\circ \obs{S}^{-1}.
\end{equation}

Equivalently, define the primed TPS by its product basis vectors
\begin{equation}
\label{eq:PW-trivial-rest-new-products}
\mc{U}'\ket{\tau}_{c'}\ket{\chi}_{r'}
=
\mc{U}\obs{S}^{-1}\ket{\tau}_c\ket{\chi}_r.
\end{equation}

Then, the pullback-TPS construction results in the expression of $\obs{H}$ in the primed TPS as $\obs{S}\obs{H}\obs{S}^{-1}$. Equation~\eqref{eq:PW-trivial-rest-H-conjugated} therefore says precisely that, in the primed TPS,
\begin{equation}
\label{eq:PW-trivial-rest-H-new-TPS}
\obs{H}=\obs{H}_{c'}\otimes\obs{I}_{r'} .
\end{equation}

Therefore, $\obs{H}_{r'}=0$.
\end{proof}

\begin{remark}
\label{rem:example:continuous}
From Proposition~\ref{thm:continuous-zero} follows immediately that any PW system is equivalent to any other PW system on spaces of equal dimension.
\end{remark}

\begin{remark}
\label{rem:interaction-picture-not}
The unitary $\obs{S}$ can also be seen, in the original TPS, as a clock-controlled change of picture. But in the proof it is used passively through the pullback-TPS construction from Remark~\ref{rem:pullback-TPS}, not to change the picture. The factor $\hilbert_{r'}$ is not the original factor $\hilbert_r$ with a different notation and in the Heisenberg picture: its product vectors are transformed nonlocally by $\obs{S}^{-1}$. Just like when the pullback-TPS construction was applied in Albrecht and Iglesias' proof, and could be interpreted as an ordinary change of basis in the same Hilbert space, the same calculation can look like an interaction-picture transformation in the old TPS, but in the present argument it defines a different clock/rest decomposition.
\end{remark}

\begin{remark}
\label{rem:interaction-picture-maximal-ambiguity}
Let us entertain the possibility that the stronger clock ambiguity from Proposition~\ref{thm:continuous-zero} is merely the same dynamics expressed in a different picture and in a different TPS. Even then, the bare PW structure would not tell us which representation is the fundamental Schr\"odinger picture. Is it the one associated with the original TPS, the one associated with the primed TPS, or another one with a different world Hamiltonian? If the answer is not fixed by additional physical input, then any two Hamiltonians become indistinguishable in a bare PW system.
\end{remark}

Let us now give the continuous-time proof, first constructively and then in terms of spectral uniformization.

\begin{theorem}[Maximal ambiguity]
\label{thm:maximal-ambiguity}
Any two ideal PW systems with $\dim\hilbert_r=\dim\hilbert_{r'}$ are unitarily equivalent, in the sense that, for any temporal states $\ket{\Psi(0)}\in\hilbert$ and $\ket{\Psi'(0)}\in\hilbert'$ and all $\tau$, there is a unitary operator $\obs{S}:\hilbert_c\otimes\hilbert_r\to\hilbert_{c'}\otimes\hilbert_{r'}$ so that
\begin{equation}
\label{eq:PW-unitarily-equivalent-H}
\obs{H}'=\obs{S}\obs{H}\obs{S}^{-1}
\end{equation}
and, for all $\tau\in\R$,
\begin{equation}
\label{eq:PW-unitarily-equivalent-state}
\ket{\Psi'(\tau)}=\obs{S}\ket{\Psi(\tau)}.
\end{equation}
\end{theorem}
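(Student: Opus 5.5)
The plan is to combine Proposition~\ref{thm:continuous-zero} with a trivial-dynamics equivalence, in the same way Theorem~\ref{thm:maximal-ambiguity-finite-cyclic} handled the finite case. Write $\obs{S}_1:=e^{\frac{i}{\hbar}\obs{T}_c\otimes\obs{H}_r}$ on $\hilbert_c\otimes\hilbert_r$ and $\obs{S}_2:=e^{\frac{i}{\hbar}\obs{T}_{c'}\otimes\obs{H}_{r'}}$ on $\hilbert_{c'}\otimes\hilbert_{r'}$. By equation~\eqref{eq:PW-trivial-rest-H-conjugated} these unitaries satisfy
\begin{equation}
\obs{S}_1\obs{H}\obs{S}_1^{-1}=\obs{H}_c\otimes\obs{I}_r,
\qquad
\obs{S}_2\obs{H}'\obs{S}_2^{-1}=\obs{H}_{c'}\otimes\obs{I}_{r'}.
\end{equation}
Both systems are therefore reduced to the decoupled form ``clock $\otimes$ frozen world.'' It remains to connect the two frozen systems by a unitary of the form
\begin{equation}
\obs{V}=\obs{V}_c\otimes\obs{W},
\end{equation}
where $\obs{V}_c:\hilbert_c\to\hilbert_{c'}$ is the identification $\ket{\tau}_c\mapsto\ket{\tau}_{c'}$. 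This map intertwines $\obs{H}_c$ and $\obs{H}_{c'}$, since both are $-i\hbar\partial_\tau$. The factor $\obs{W}:\hilbert_r\to\hilbert_{r'}$ is any unitary, which exists because $\dim\hilbert_r=\dim\hilbert_{r'}$ and both spaces are separable. The candidate is then
\begin{equation}
\obs{S}:=\obs{S}_2^{-1}\obs{V}\obs{S}_1 .
\end{equation}
Equation~\eqref{eq:PW-unitarily-equivalent-H} follows immediately, because $\obs{V}$ commutes the frozen Hamiltonians into each other.

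For the states, I will first compute how $\obs{S}_1$ acts on a temporal state. Since $\obs{T}_c\ket{\tau}_c=\tau\ket{\tau}_c$, we have
\begin{equation}
\obs{S}_1\ket{\tau}_c\ket{\psi(\tau)}_r=\ket{\tau}_c\,e^{\frac{i}{\hbar}\obs{H}_r\tau}\ket{\psi(\tau)}_r=\ket{\tau}_c\ket{\psi(0)}_r
\end{equation}
by equation~\eqref{eq:PW-evol}. So $\obs{S}_1$ sends each history to a frozen history with the same clock reading, and likewise $\obs{S}_2$ sends $\ket{\tau}_{c'}\ket{\psi'(\tau)}_{r'}$ to $\ket{\tau}_{c'}\ket{\psi'(0)}_{r'}$. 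I will choose $\obs{W}$ with $\obs{W}\ket{\psi(0)}_r=\ket{\psi'(0)}_{r'}$. To do this, complete $\ket{\psi(0)}_r$ and $\ket{\psi'(0)}_{r'}$ to orthonormal bases $(\ket{e_k}_r)_k$ and $(\ket{v_k}_{r'})_k$, as in the proof of Theorem~\ref{thm:maximal-ambiguity-finite-cyclic}, and set $\obs{W}\ket{e_k}_r=\ket{v_k}_{r'}$. Then, for every $\tau$,
\begin{equation}
\obs{S}\ket{\Psi(\tau)}=\obs{S}_2^{-1}\ket{\tau}_{c'}\ket{\psi'(0)}_{r'}=\ket{\Psi'(\tau)},
\end{equation}
which is equation~\eqref{eq:PW-unitarily-equivalent-state}. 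The identity holds uniformly in $\tau$ with a single $\obs{S}$, which matches the order of quantifiers in the statement. As a corollary the timeless states also correspond, $\obs{S}\kett{\Psi}=\kett{\Psi'}$.

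The main obstacle is rigor rather than algebra. First, $\obs{T}_c\otimes\obs{H}_r$ is an unbounded self-adjoint operator, and the conjugation formula~\eqref{eq:PW-trivial-rest-Hc-conjugated} must be justified. I would do this in the spectral representation of $\obs{H}_r$. On each spectral fiber $E$, $\obs{S}_1$ acts as multiplication by $e^{iE\tau/\hbar}$ in the $\tau$-representation, and conjugating $-i\hbar\partial_\tau$ by it yields $-i\hbar\partial_\tau-E$, including the correct domain, because the map is a unitary gauge transformation on $L^2(\R)\otimes\hilbert_r$. Second, the states $\ket{\tau}_c$ are only generalized eigenvectors. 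I will therefore state the state identity via the direct-integral decomposition $\hilbert_c\otimes\hilbert_r\cong L^2(\R,\hilbert_r)$, where $\obs{S}_1$ acts pointwise as $(\obs{S}_1f)(\tau)=e^{\frac{i}{\hbar}\obs{H}_r\tau}f(\tau)$. In this form the computation above is a pointwise statement and needs no rigging. If $\dim\hilbert_r=\infty$, separability guarantees that $\obs{W}$ exists, so that case introduces no extra difficulty.
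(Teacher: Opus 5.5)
Your proposal is correct and is essentially the paper's own argument: your $\obs{S}=\obs{S}_2^{-1}(\obs{V}_c\otimes\obs{W})\obs{S}_1$ is exactly the map $e^{-\frac{i}{\hbar}\obs{T}_{c'}\otimes\obs{H}_{r'}}(\obs{W}_c\otimes\obs{W}_r)e^{\frac{i}{\hbar}\obs{T}_c\otimes\obs{H}_r}$ written at the end of the paper's alternative proof, and your $\obs{V}\obs{S}_1$ coincides with the main proof's map $\ket{\tau,k}\mapsto\ket{\tau}_{c'}\ket{v_k}_{r'}$, since $\ket{\tau,k}=\obs{S}_1^{-1}\ket{\tau}_c\ket{e_k}_r$. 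Your fiberwise treatment in $L^2(\R,\hilbert_r)$ is a sound way to make the unbounded-operator steps rigorous (conjugating the unitary groups and then invoking Stone's theorem is the cleanest version), and the only thing the paper's basis formulation adds is coverage of $\tau$-dependent $\obs{H}_r(\tau)$, which the stated theorem does not require.
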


\begin{proof}
This proof shows the unitary equivalence more explicitly, and also works for $\tau$-dependent world Hamiltonians $\obs{H}_r(\tau)$.
To simplify the notation, I assume that $\obs{H}_r$ has discrete spectrum, but this will not affect the result.
Choose an orthonormal basis in $\hilbert_r$, 
\begin{equation}
\label{eq:PW-basis}
\(\ket{e_1}_r, \ket{e_2}_r,\ldots\).
\end{equation}

For each index $k$ of the basis vectors~\eqref{eq:PW-basis}, denote 
\begin{equation}
\label{eq:PW-tk-translation}
\ket{\tau,k}:=\obs{U}(\tau,0)\ket{0}_c\ket{e_k}_r=\ket{\tau}_c \obs{U}_r(\tau,0)\ket{e_k}_r,
\end{equation}
where $\obs{U}(\tau,0)$ and $\obs{U}_r(\tau,0)$ are obtained by time-ordering (anti-ordering for $\tau<0$) from the time-dependent Hamiltonians.
Then,
\begin{equation}
\label{eq:PW-tk-ortho}
\braket{\tau,k}{\tau',k'}\stackrel{\eqref{eq:t-basis}}{=}\delta_{kk'}\delta(\tau-\tau').
\end{equation}

Let $\hilbert_k$ be the Hilbert space spanned by the vectors $\ket{\tau,k}$, for all $\tau\in\R$ and fixed $k$. Then, from equations~\eqref{eq:PW-tk-translation} and~\eqref{eq:PW-tk-ortho} we get that 
the restriction of the total time evolution operator $\obs{U}(\tau,0)$ to $\hilbert_k$ coincides with the translation with $\tau$. Therefore, due to Stone's theorem \cite{Stone1932OnOneParameterUnitaryGroupsInHilbertSpace} (\cite{Hall2013QuantumTheoryForMathematicians}, Theorem 10.15 p. 210), the restriction of the total Hamiltonian $\obs{H}$ to $\hilbert_k$ coincides with the operator $-i\hbar\pdv{\tau}|_{\hilbert_k}$,
\begin{equation}
\label{eq:subspace-k-translation}
\obs{H}|_{\hilbert_k}=-i\hbar\pdv{\tau}|_{\hilbert_k}.
\end{equation}

We define the unitary equivalence $\obs{S}:\hilbert\to\hilbert'$ by
\begin{equation}
\label{eq:unitary-equiv-trivial}
\obs{S}\ket{\tau,k}=\ket{\tau}_{c'}\ket{v_k}_{r'},
\end{equation}
where $\(\ket{v_k}_{r'}\)_k$ is an orthonormal basis of $\hilbert_{r'}$.

Since $\hilbert=\bigoplus_k\hilbert_k$, the total Hamiltonian $\obs{H}$ is unitarily equivalent to the Hamiltonian $\obs{H}'$ on $\hilbert'$ where
\begin{equation}
\label{eq:total-space-translation}
\obs{H}'=-i\hbar\pdv{\tau}\otimes \obs{I}_{r'}.
\end{equation}

Moreover, $\obs{S}$ maps each history $(\obs{U}(\tau,0)\ket{0}_c\ket{e_k}_r)_{\tau\in\R}$ of the first PW system into a history $(e^{-\frac{i}{\hbar}\obs{H}_{c'} \tau}\ket{0}_{c'}\ket{v_k}_{r'})_{\tau\in\R}$ of the trivial PW system, as seen in equation~\eqref{eq:unitary-equiv-trivial}.
Therefore, our PW system is unitarily equivalent to a trivial PW system.

Now consider two general PW systems, not necessarily trivial, for which $\dim\hilbert_{r'}=\dim\hilbert_r$. Since both PW systems are unitarily equivalent to the trivial PW system, they are unitarily equivalent to one another.
\end{proof}

Let us also give a proof discussing explicitly the spectral measure class.
\begin{proof}[Alternative proof]
At first sight, this may seem impossible, because the existence of a unitary transformation $\obs{S}$ seems blocked by the fact that $\obs{H}_r$ and $\obs{H}_{r'}$ may have different spectra.
But $\obs{H}$ and $\obs{H}'$ have the same spectrum, as we shall see.
However, the proof is less direct than in the discrete case.

We will first prove the equivalence of a PW system with a \emph{trivial PW system}, \ie a PW system for which the Hamiltonian of the rest of the world is $\obs{H}_{r'}=0$.

Consider a PW system and a trivial PW system,
\begin{equation}
\label{eq:PW-maximal-ambiguity}
\begin{cases}
\hilbert=\hilbert_c\otimes\hilbert_r \\
\obs{H}=\obs{H}_c\otimes\obs{I}_r+\obs{I}_c\otimes\obs{H}_r\\
\kett{\Psi}=\int_\R\ket{\tau}_c\ket{\psi(\tau)}_r\dd \tau \\
\end{cases}
\hspace{-0.15in}
\begin{cases}
\hilbert'=\hilbert_{c'}\otimes\hilbert_{r'} \\
\obs{H}'=\obs{H}_{c'}\otimes\obs{I}_{r'}\\
\kett{\Psi'}=\int_\R\ket{\tau}_{c'}\ket{v}_{r'}\dd \tau \\
\end{cases}
\end{equation}
so that $\dim\hilbert_r=\dim\hilbert_{r'}$ and $\obs{H}_{r'}=0$.

The spectrum of $\obs{H}$ is the Minkowski sum of the spectra of $\obs{H}_c$ and $\obs{H}_r$, $\varsigma(\obs{H})=\varsigma(\obs{H}_c)+\varsigma(\obs{H}_r)$. But since the spectrum $\varsigma(\obs{H}_c)=\R$, this implies that $\varsigma(\obs{H})=\R+\varsigma(\obs{H}_r)=\R$.
Therefore, regardless of the Hamiltonian $\obs{H}_r$, the total Hamiltonian $\obs{H}$ has the spectrum $\R$. Let $(\ket{\varepsilon}_c)_{\varepsilon\in\R}$ be an eigenbasis of $\obs{H}_c$. Heuristically, each of the eigenspaces $\hilbert_{\varepsilon}\subset\hilbert$ of $\obs{H}$, spanned by tensor products of the form $\ket{\varepsilon-a}_c\ket{a,m}_r$, where $\ket{a,m}_r$ is an eigenvector corresponding to the eigenvalue $a$ of $\obs{H}_r$ and $m$ distinguishes eigenvectors of the same multiplicity of $a$, has dimension $\dim\hilbert_{\varepsilon}=\dim\hilbert_r$.

This heuristic should be made rigorous.
Unlike the discrete case, for $\tau\in\R$ this is not enough to prove the unitary equivalence, we also need to show that the spectral measure class of $\obs{H}$ is independent of the Hamiltonian operator $\obs{H}_r$.
But this is ensured because the spectral measure of $\obs{H}_c=-i\hbar\pdv{\tau}$ is the Lebesgue measure, and when combining $\obs{H}_c$ with $\obs{H}_r$ in $\obs{H}_c\otimes\obs{I}_r+\obs{I}_c\otimes\obs{H}_r$, this measure is only translated with increments equal to the eigenvalues of $\obs{H}_r$. Since the Lebesgue measure is invariant under translations, the eigenvalues of $\obs{H}_r$ are literally ``lost in translation'' and could as well all be just $0$.
This applies even when the spectrum of $\obs{H}_r$ has continuous parts.
To see this, we'll work in an eigenbasis that diagonalizes both $\obs{H}_c$ and $\obs{H}_r$. In the eigenbasis $(\ket{\varepsilon}_c)_{\varepsilon\in\R}$, $\obs{H}_c$ acts by multiplication by the clock's energy eigenvalues $\varepsilon$.
According to the spectral theorem~\cite{ReedSimon1980MethodsOfModernMathematicalPhysics,Hall2013QuantumTheoryForMathematicians} applied to $\obs{H}_r$, there is a standard measure space $(X,\mu)$ and a measurable multiplicity function $m:X\to\N$ so that
\begin{equation}
\label{eq:spectral-theorem-space-r}
\hilbert_r\cong\int_X^\otimes\C^{m(\xi)}\dd\mu(\xi)
\end{equation}
and a function $\alpha:X\to\R$ so that, in this representation, $\obs{H}_r$ is diagonal and acts by multiplication by $\alpha(\xi)$
\begin{equation}
\label{eq:spectral-theorem-obs}
\matrixel{\xi}{\obs{H}_r}{\psi}_r=\alpha(\xi)\braket{\xi}{\psi}_r.
\end{equation}

In terms of square integrable functions valued in fibers $\hilbert_{\xi}:=\C^{m(\xi)}$, $\hilbert$ becomes~\cite{ReedSimon1980MethodsOfModernMathematicalPhysics,Hall2013QuantumTheoryForMathematicians}
\begin{equation}
\label{eq:spectral-theorem-space-total}
\begin{aligned}
\hilbert
&\cong L^2(\R_{E},\dd\varepsilon)_c\otimes\hilbert_r \\
&\cong L^2\(\R\times X,\dd\varepsilon\dd\mu(\xi);\C^{m(\xi)}\).
\end{aligned}
\end{equation}

Since we diagonalized both $\obs{H}_c$ and $\obs{H}_r$, $\obs{H}$ acts by multiplication by $\varepsilon+\alpha(\xi)$, so
\begin{equation}
\label{eq:spectral-theorem-hamiltonian-total}
\braa{\varepsilon,\xi}\obs{H}\kett{\Psi}=\(\varepsilon+\alpha(\xi)\)\braakett{\varepsilon,\xi}{\Psi}.
\end{equation}

To construct a unitary transformation $\obs{S}$ connecting the two PW systems, for each $\xi\in X$, define $\obs{S}_{\xi}$ on $\hilbert_{\xi}$ by
\begin{equation}
\label{eq:spectral-U-xi}
\braa{\varepsilon,\xi}\obs{S}_{\xi}\kett{\Psi}=\braakett{\varepsilon-\alpha(\xi),\xi}{\Psi}.
\end{equation}

This transformation acts like a translation $\braakett{\varepsilon,\xi}{\Psi}\mapsto\braakett{\varepsilon-\alpha(\xi),\xi}{\Psi}$ on $L^2(\R_{E},\dd\varepsilon)_c$, so $\obs{S}_{\xi}$ is unitary.
Since $\dd\varepsilon$ is the Lebesgue measure, it is invariant under translations, so integrating over $X$ preserves the norm for any normalized $\kett{\Psi}\in\hilbert$,
$\int_X\int_{\R}\abs{\braakett{\varepsilon-\alpha(\xi),\xi}{\Psi}}^2\dd\varepsilon\dd\mu(\xi) 
= \int_X\int_{\R}\abs{\braakett{\varepsilon,\xi}{\Psi}}^2\dd\varepsilon\dd\mu(\xi)$.
Therefore, the operator $\obs{S}$ on $\hilbert$ obtained by direct integration over $X$ of all operators $\obs{S}_{\xi}$ defined on $\hilbert_{\xi}$ is unitary.
Then,
\begin{equation}
\label{eq:spectral-U-H}
\begin{aligned}
\braa{\varepsilon,\xi}\obs{S}_{\xi}\obs{H}\kett{\Psi}
&\stackrel{\eqref{eq:spectral-U-xi}}{=}\braa{\varepsilon-\alpha(\xi),\xi}\obs{H}\kett{\Psi} \\
&\stackrel{\eqref{eq:spectral-theorem-hamiltonian-total}}{=}\(\(\varepsilon-\alpha(\xi)\)+\alpha(\xi)\)\braakett{\varepsilon-\alpha(\xi),\xi}{\Psi} \\
&\stackrel{\eqref{eq:spectral-U-xi}}{=}\varepsilon\braa{\varepsilon,\xi}\obs{S}\kett{\Psi}.
\end{aligned}
\end{equation}

Thus $\obs{S}\obs{H}=\obs{H}_0\obs{S}$, where $\obs{H}_0$ acts by multiplication by $\varepsilon$. Since $\obs{S}$ is unitary, this is equivalent to $\obs{S}\obs{H}\obs{S}^{-1}=\obs{H}_0$.
Therefore, the transformed Hamiltonian $\obs{S}\obs{H}\obs{S}^{-1}$ acts, in the representation~\eqref{eq:spectral-theorem-space-total}, as multiplication by $\varepsilon$.
This action is independent of the operator $\obs{H}_r$, whose spectrum was completely washed out under this transformation.
This proves that the Hamiltonian $\obs{H}_c\otimes\obs{I}_r+\obs{I}_c\otimes\obs{H}_r$ is unitarily equivalent to the Hamiltonian $\obs{H}_c\otimes\obs{I}_r$.

At this point, the transformation $\obs{S}$ establishes the unitary equivalence of the PW system with a trivial one. By composing $\obs{S}$, if necessary, with a local unitary transformation on the Hilbert space $\hilbert_{r'}$, one can find a unitary transformation that also satisfies $\ket{\Psi'(\tau)}=\obs{S}\ket{\Psi(\tau)}$.
More explicitly, choose unitary maps $\obs{W}_c:\hilbert_c\to\hilbert_{c'}$ and $\obs{W}_r:\hilbert_r\to\hilbert_{r'}$ such that $\obs{W}_c\ket{\tau}_c=\ket{\tau}_{c'}$ and $\obs{W}_r\ket{\psi(0)}_r=\ket{\psi'(0)}_{r'}$. Then define $\obs{S}=e^{-\frac{i}{\hbar}\obs{T}_{c'}\otimes\obs{H}_{r'}}(\obs{W}_c\otimes\obs{W}_r)e^{\frac{i}{\hbar}\obs{T}_c\otimes\obs{H}_r}$.
Indeed,
\begin{equation}
\begin{aligned}
\obs{S}\ket{\tau}_c\ket{\psi(\tau)}_r
&=e^{-\frac{i}{\hbar}\obs{T}_{c'}\otimes\obs{H}_{r'}}(\obs{W}_c\otimes\obs{W}_r)\ket{\tau}_c\ket{\psi(0)}_r\\
&=e^{-\frac{i}{\hbar}\obs{T}_{c'}\otimes\obs{H}_{r'}}\ket{\tau}_{c'}\ket{\psi'(0)}_{r'}\\
&=\ket{\tau}_{c'}\ket{\psi'(\tau)}_{r'}.
\end{aligned}
\end{equation}

This proves both the equivalence of the Hamiltonians and the equivalence
of the selected histories. The special case $\obs{H}_{r'}=0$ gives the
trivial PW system used in the spectral discussion above.
Since we proved the equivalence of any ideal PW system with the trivial PW system on the same space, all ideal PW systems are equivalent.
\end{proof}

We see that any two PW systems with equal dimensions are unitarily equivalent \emph{even if there are no interactions between the clock and the rest of the world}.

Now it is easy to see that any two histories of the same PW system are indistinguishable in a bare PW system.

\begin{corollary}
\label{thm:maximal-ambiguity-world}
Under the bare PW equivalence relation, any two histories following the same evolution law are equivalent.
\end{corollary}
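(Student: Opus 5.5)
The corollary follows by specializing Theorem~\ref{thm:maximal-ambiguity} (and, for discrete clocks, Theorem~\ref{thm:maximal-ambiguity-finite-cyclic}) to the case where the two PW systems coincide as structures. Take a single PW system with $\hilbert=\hilbert_c\otimes\hilbert_r$ and $\obs{H}=\obs{H}_c\otimes\obs{I}_r+\obs{I}_c\otimes\obs{H}_r$, together with two histories $\tau\mapsto\ket{\psi(\tau)}_r$ and $\tau\mapsto\ket{\psi'(\tau)}_r$, both compatible with $\obs{H}_r$. Apply the theorem with $\hilbert'=\hilbert$, $\obs{H}_{r'}=\obs{H}_r$ (so the dimension hypothesis holds trivially), and temporal states $\ket{\Psi(0)}=\ket{0}_c\ket{\psi(0)}_r$, $\ket{\Psi'(0)}=\ket{0}_c\ket{\psi'(0)}_r$. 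It produces a unitary $\obs{S}$ on $\hilbert$ with $\obs{S}\obs{H}\obs{S}^{-1}=\obs{H}$ and $\obs{S}\ket{\Psi(\tau)}=\ket{\Psi'(\tau)}$ for all $\tau$. Integrating over $\tau$ then gives $\obs{S}\kett{\Psi}=\kett{\Psi'}$ for the timeless states.

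To make this concrete, I would use the explicit operator from the alternative proof,
\[
\obs{S}=e^{-\frac{i}{\hbar}\obs{T}_c\otimes\obs{H}_r}(\obs{I}_c\otimes\obs{W}_r)e^{\frac{i}{\hbar}\obs{T}_c\otimes\obs{H}_r},
\]
where $\obs{W}_r$ is any unitary on $\hilbert_r$ with $\obs{W}_r\ket{\psi(0)}_r=\ket{\psi'(0)}_r$. Such a $\obs{W}_r$ exists because both vectors are normalized: complete each to an orthonormal basis and map one basis onto the other. The computation already displayed in the alternative proof gives $\obs{S}\ket{\tau}_c\ket{\psi(\tau)}_r=\ket{\tau}_c\ket{\psi'(\tau)}_r$.

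To confirm that $\obs{S}$ preserves the law, I would use equation~\eqref{eq:PW-trivial-rest-H-conjugated}. Conjugating by $e^{\frac{i}{\hbar}\obs{T}_c\otimes\obs{H}_r}$ sends $\obs{H}$ to $\obs{H}_c\otimes\obs{I}_r$. That operator commutes with $\obs{I}_c\otimes\obs{W}_r$. Conjugating back by the inverse exponential restores $\obs{H}$. Hence $\obs{S}\obs{H}\obs{S}^{-1}=\obs{H}$, which is condition~\eqref{eq:clock-ambiguity-laws} with $\obs{H}_{r'}=\obs{H}_r$.

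Finally, I would read $\obs{S}$ passively through the pullback-TPS construction of Remark~\ref{rem:pullback-TPS}. This gives a new TPS in which the fixed state $\kett{\Psi}$ encodes the history $\psi'$ under the same Hamiltonian. That is exactly the bare PW equivalence of Definition~\ref{def:ambiguity}.

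The only delicate step is the domain issue. $\obs{T}_c\otimes\obs{H}_r$ is unbounded, and its exponential must be defined, for example through the joint spectral resolution. This point was already handled in the theorem. If I wanted to avoid it entirely, I would use the basis construction from the first proof: set $\obs{S}\ket{\tau,k}=\ket{\tau,k}'$, with bases chosen so that $e_1=\psi(0)$ and $v_1=\psi'(0)$. In the discrete case, Theorem~\ref{thm:maximal-ambiguity-finite-cyclic} with $\obs{U}_{r'}=\obs{U}_r$ gives the result verbatim.
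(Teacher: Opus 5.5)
Your proof is correct and is essentially the paper's: your operator $e^{-\frac{i}{\hbar}\obs{T}_c\otimes\obs{H}_r}(\obs{I}_c\otimes\obs{W}_r)e^{\frac{i}{\hbar}\obs{T}_c\otimes\obs{H}_r}$ is exactly the paper's $\obs{S}=\int_{\R}\dyad{\tau}_c\otimes\obs{S}_\tau\dd\tau$ with $\obs{S}_\tau=e^{-\frac{i}{\hbar}\obs{H}_r\tau}\obs{W}_r e^{\frac{i}{\hbar}\obs{H}_r\tau}$ and $\obs{S}_0=\obs{W}_r$. The only difference is cosmetic: you verify $\obs{S}\obs{H}\obs{S}^{-1}=\obs{H}$ through the conjugation identity~\eqref{eq:PW-trivial-rest-H-conjugated}, whereas the paper reads it off from $\obs{S}$ intertwining the time evolution of histories, and your verification is, if anything, the cleaner one.
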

\begin{proof}
It follows from Theorem~\ref{thm:maximal-ambiguity}, applied to two PW systems obtained from the same Hamiltonian $\obs{H}=\obs{H}_c\otimes\obs{I}_r+\obs{I}_c\otimes\obs{H}_r$ but with two possibly distinct temporal states $\ket{0}_c\ket{\psi(0)}$ and $\ket{0}_c\ket{\psi'(0)}$.

It may be helpful to see this more explicitly. For any two unit vectors $\ket{\psi(0)}$ and $\ket{\psi'(0)}$ in $\hilbert_r$, there is a unitary transformation $\obs{S}_{0}:\hilbert_r\to\hilbert_r$ so that $\obs{S}_{0}\ket{\psi(0)}=\ket{\psi'(0)}$.
Since $\ket{\psi(\tau)}=e^{-\frac{i}{\hbar}\obs{H}_r\tau}\ket{\psi(0)}$, we get $\ket{\psi'(\tau)}=e^{-\frac{i}{\hbar}\obs{H}_r\tau}\ket{\psi'(0)}=e^{-\frac{i}{\hbar}\obs{H}_r\tau}\obs{S}_{0}\ket{\psi(0)}=e^{-\frac{i}{\hbar}\obs{H}_r\tau}\obs{S}_{0}e^{+\frac{i}{\hbar}\obs{H}_r\tau}\ket{\psi(\tau)}$. Then, by defining $\obs{S}_\tau:=e^{-\frac{i}{\hbar}\obs{H}_r\tau}\obs{S}_{0}e^{+\frac{i}{\hbar}\obs{H}_r\tau}$, we get $\obs{S}_{\tau}\ket{\psi(\tau)}=\ket{\psi'(\tau)}$.
Now construct the operator $\obs{S}:\hilbert\to\hilbert$, $\obs{S}=\int_{\R}\dyad{\tau}_c\otimes\obs{S}_\tau \dd\tau$. Then, $\obs{S}\ket{\tau}_c\ket{\psi(\tau)}_r=\ket{\tau}_c e^{-\frac{i}{\hbar}\obs{H}_r\tau}\obs{S}_{0}e^{+\frac{i}{\hbar}\obs{H}_r\tau}\ket{\psi(\tau)}_r=\ket{\tau}_c e^{-\frac{i}{\hbar}\obs{H}_r\tau}\obs{S}_{0}\ket{\psi(0)}_r=\ket{\tau}_c\ket{\psi'(\tau)}_r$.
Since $\ket{\tau}_c\ket{\psi'(\tau)}_r=e^{-\frac{i}{\hbar}\obs{H}\tau}\ket{0}_c\ket{\psi'(0)}_r$, we also get $\obs{S}\obs{H}=\obs{H}\obs{S}$, establishing the equivalence of the histories under the same law.
\end{proof}

\begin{corollary}
\label{thm:maximal-ambiguity-law}
Two bare PW systems with different Hamiltonians for the rest of the world are indistinguishable.
\end{corollary}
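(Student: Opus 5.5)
The plan is to derive Corollary~\ref{thm:maximal-ambiguity-law} directly from Theorem~\ref{thm:maximal-ambiguity}, in the same way Corollary~\ref{thm:maximal-ambiguity-world} was derived. First I will make ``indistinguishable'' precise in the sense of Definition~\ref{def:ambiguity}. Two bare PW systems are indistinguishable if there is a unitary $\obs{S}:\hilbert_c\otimes\hilbert_r\to\hilbert_{c'}\otimes\hilbert_{r'}$ with $\obs{S}\obs{H}\obs{S}^{-1}=\obs{H}'$ that maps histories into histories. By the pullback-TPS construction of Remark~\ref{rem:pullback-TPS}, this is the same as saying there is a single Hilbert space $\hilbert$ with a single timeless state $\kett{\Psi}$ and a single total Hamiltonian $\obs{H}$, carrying two TPSs. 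In one TPS the world Hamiltonian is $\obs{H}_r$; in the other it is $\obs{H}_{r'}$. That is, both the ambiguity for histories~\eqref{eq:clock-ambiguity-histories} and the ambiguity for laws~\eqref{eq:clock-ambiguity-laws} hold.

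The key steps are as follows.
\begin{enumerate}
\item Take two bare PW systems with world Hamiltonians $\obs{H}_r\neq\obs{H}_{r'}$ on spaces of equal dimension. The equal-dimension hypothesis is implicit, since the dimension is the one invariant identified earlier. Fix arbitrary temporal states $\ket{0}_c\ket{\psi(0)}_r$ and $\ket{0}_{c'}\ket{\psi'(0)}_{r'}$.
\item Apply Theorem~\ref{thm:maximal-ambiguity} to obtain $\obs{S}$ with $\obs{H}'=\obs{S}\obs{H}\obs{S}^{-1}$ and $\ket{\Psi'(\tau)}=\obs{S}\ket{\Psi(\tau)}$ for all $\tau$. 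Concretely, I will use the explicit map $\obs{S}=e^{-\frac{i}{\hbar}\obs{T}_{c'}\otimes\obs{H}_{r'}}(\obs{W}_c\otimes\obs{W}_r)e^{\frac{i}{\hbar}\obs{T}_c\otimes\obs{H}_r}$ from the alternative proof.
\item Integrate over $\tau$ to get $\obs{S}\kett{\Psi}=\kett{\Psi'}$. This step is legitimate because $\obs{S}$ is a bounded unitary and the timeless states are generalized vectors defined as integrals of temporal states.
\item Set $\mc{U}':=\obs{S}^{-1}$ as a new TPS on the original $\hilbert$. Reading $\kett{\Psi}$ and $\obs{H}$ in this TPS gives the history $\ket{\psi'(\tau)}_{r'}$ and the law $\obs{H}_{c'}\otimes\obs{I}_{r'}+\obs{I}_{c'}\otimes\obs{H}_{r'}$. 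This is exactly conditions~\eqref{eq:clock-ambiguity-histories} and~\eqref{eq:clock-ambiguity-laws}.
\end{enumerate}

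Finally, I will note what the equivalence preserves. Every datum available to the bare PW structure is carried over intact by $\obs{S}$: the Hilbert space, the total Hamiltonian (or the total $\obs{U}$ in the discrete case), the constraint $\obs{H}\kett{\Psi}=0$, and the spectra. Since $\obs{S}$ also maps the clock-reading vectors to clock-reading vectors, all conditional expectation values~\eqref{eq:exp-val} are preserved too, provided the observables are transported by $\obs{S}$. So no relational test can tell the two laws apart. For discrete clocks, the same argument goes through with Theorem~\ref{thm:maximal-ambiguity-finite-cyclic} in place of Theorem~\ref{thm:maximal-ambiguity}, or with its infinite version in the Appendix, and with $\obs{U}$ in place of $\obs{H}$.

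The main obstacle is conceptual rather than computational. One must be clear that ``indistinguishable'' refers to the bare structure, in which observables are not labelled by physical meaning. Without that caveat, an objector could point to $\obs{I}_c\otimes\obs{H}_r$ as a physical observable that differs from $\obs{I}_{c'}\otimes\obs{H}_{r'}$. The answer is that in the bare PW structure the only available handles are the TPS, $\obs{H}$ and $\kett{\Psi}$, and all three are matched by $\obs{S}$. A minor technical point is to check that $\obs{S}$ preserves the rigged-space clock vectors, so that the conditioning on $\ket{\tau}_c$ transports correctly. This follows because $\obs{S}\ket{\tau}_c\ket{\chi}_r=\ket{\tau}_{c'}\otimes(\cdots)$ from the explicit form of $\obs{S}$.
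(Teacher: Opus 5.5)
Your proposal is correct and follows the paper's route. The paper's proof is a one-line application of Theorem~\ref{thm:maximal-ambiguity} to $\obs{H}=\obs{H}_c\otimes\obs{I}_r+\obs{I}_c\otimes\obs{H}_r$ and $\obs{H}'=\obs{H}_c\otimes\obs{I}_r+\obs{I}_c\otimes\obs{H}_r'$, and your extra steps (integrating over $\tau$, then reading $\obs{S}^{-1}$ as a pullback TPS to land on Definition~\ref{def:ambiguity}) just unpack that application. One small correction: acting with $\obs{S}$ on the non-normalizable $\kett{\Psi}$ is justified by its fiberwise form $\obs{S}\ket{\tau}_c\ket{\chi}_r=\ket{\tau}_{c'}\obs{V}_\tau\ket{\chi}_r$, which you note later, not by boundedness alone.
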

\begin{proof}
Follows from Theorem~\ref{thm:maximal-ambiguity}, applied to the Hamiltonians $\obs{H}=\obs{H}_c\otimes\obs{I}_r+\obs{I}_c\otimes\obs{H}_r$ and $\obs{H}'=\obs{H}_c\otimes\obs{I}_r+\obs{I}_c\otimes\obs{H}_r'$.
\end{proof}

\begin{remark}
\label{rem:ambiguity-way-out}
One may hope to avoid the ambiguity by allowing the clock to be affected by the rest of the world and/or by replacing the projective observable $\obs{T}_c$ with a POVM~\cite{Busch2002Time-energy-uncertainty-relation,LoveridgeMiyadera2019RelativeQuantumTime,HohnSmithLock2021TrinityOfRelationalQuantumDynamics,GemsheimRost2023EmergenceOfTimeFromQuantumInteractionWithTheEnvironment}.
But the space of projective observables is a strict subspace of the space of POVMs, and the space of noninteracting Hamiltonians is a strict subspace of the space of interacting Hamiltonians, so, at least at the level of available decompositions, such generalizations enlarge rather than shrink the space of candidates. Therefore they do not by themselves remove the ambiguity; additional physical restrictions would be needed.
\end{remark}

\subsection{Maximal ambiguity \emph{vs.} clock ambiguity}
\label{s:ambiguity-vs-clock}

Theorems~\ref{thm:maximal-ambiguity-finite-cyclic},~\ref{thm:maximal-ambiguity}, and~\ref{thm:maximal-ambiguity-discrete} show that ideal clocks exhibit the stronger ambiguity. This is stronger than the original Albrecht-Iglesias ambiguity for histories, because it also intertwines the evolution laws. Together, these results show that, even if we impose the noninteraction condition, the ambiguity is maximal whenever the clock is ideal and the timeless state is stationary.

\section{The noninteraction condition and spectral uniformization}
\label{s:reviews-Marletto-Vedral}

Despite the fact that Theorems~\ref{thm:maximal-ambiguity-finite-cyclic},~\ref{thm:maximal-ambiguity}, and~\ref{thm:maximal-ambiguity-discrete} prove the maximal clock ambiguity not only for histories, but also for laws, one may think that even the weaker clock ambiguity was already eliminated by Marletto and Vedral~\cite{MarlettoVedral2017EvolutionWithoutEvolutionAndWithoutAmbiguities}. How can these results be reconciled with that conclusion?

Marletto and Vedral~\cite{MarlettoVedral2017EvolutionWithoutEvolutionAndWithoutAmbiguities} propose that the unitary transformation used by Albrecht and Iglesias to transform a solution into another one is too general, changing the form of the Hamiltonian~\eqref{eq:PW-noninteracting} by introducing an interaction term as in
\begin{equation}
\label{eq:PW-interacting}
\obs{H}=\obs{H}_{c'}\otimes\obs{I}_{r'}+\obs{I}_{c'}\otimes\obs{H}_{r'} + \obs{H}_{\mathrm{int}}
\end{equation}
where $\obs{H}_{\mathrm{int}}$ specifies a nontrivial interaction between the subsystems $c'$ and $r'$.
This suggests that the ambiguity might be removed by allowing only transformations $\obs{M}$ for which $\obs{H}_{\mathrm{int}}=0$.

In their proof, they studied the unitary transformations $\obs{W}$ mapping the old TPS into a new one. They considered that the TPSs obtained from the original one by a transformation $\obs{W}$ that commutes with the total Hamiltonian $\obs{H}$, or by one of whose nonlocal part $\obs{W}_{cr}$ commutes with $\obs{H}$, \emph{``would have a trivial, local action on $\kett{\psi}$''}. They added that in the other cases the Hamiltonian gains an interaction term $\obs{H}_{\mathrm{int}}\neq 0$ in the new TPS.

However, the results above show that the same PW system can even contain different world Hamiltonians in different choices of the TPS. Proposition~\ref{thm:continuous-zero} shows this for the continuous clocks, and Theorem~\ref{thm:maximal-ambiguity} details it.
So there are distinct TPSs in which not only there is no interaction term, but even the world's Hamiltonian can be replaced by any other self-adjoint Hamiltonian on a unitarily isomorphic Hilbert space.
The proof in Marletto and Vedral~\cite{MarlettoVedral2017EvolutionWithoutEvolutionAndWithoutAmbiguities} treats nonlocal changes of TPS as leading to interaction terms in the transformed Hamiltonian. The constructions above show that there is another possibility: a nonlocal pullback of the TPS can leave the Hamiltonian noninteracting, while changing the Hamiltonian assigned to the rest of the world. This is possible because the spectrum of the total Hamiltonian $\obs{H}$ remains unchanged, being $\R$ in all cases.

One natural way to understand the difference is the following. If one assumes that equality of the total Hamiltonian's spectral type also fixes the spectra of the world's Hamiltonian $\obs{H}_r$, then the transformations considered here are invisible. But for ideal clocks this assumption fails, because the clock spectrum uniformizes the rest spectrum.
It is possible that $\varsigma(\obs{H}_c\otimes\obs{I}_r+\obs{I}_c\otimes\obs{H}_r)=\varsigma(\obs{H}_{c'}\otimes\obs{I}_{r'}+\obs{I}_{c'}\otimes\obs{H}_{r'})$ and $\varsigma(\obs{H}_c)=\varsigma(\obs{H}_{c'})$ even if $\varsigma(\obs{H}_r)\neq\varsigma(\obs{H}_{r'})$. In finite dimension $\varsigma(\obs{H}_r)=\varsigma(\obs{H}_{r'})$, but for continuous time, the spectrum is uniformized. And also for discrete time, we should look at the unitary evolution operators, and then the same applies, as seen in Example~\ref{example:two-qubits} and Theorems~\ref{thm:maximal-ambiguity-finite-cyclic} and~\ref{thm:maximal-ambiguity-discrete}.

A charitable reconstruction of the no-ambiguity conclusion is that noninteracting decompositions related by unitary equivalence are being counted as the same physical description. The present results can be understood, from this point of view, as showing that, for ideal clocks, this equivalence relation is too coarse to recover a unique rest dynamics, and in fact the equivalence classes become united into a single class, which contains all possible self-adjoint world Hamiltonians on unitarily isomorphic Hilbert spaces.

\section{Physical meaning \vs ambiguity}
\label{s:physical-meaning}

Since the origin of the clock ambiguity is the high-dimensional unitary symmetry, there is a natural way out of it: reduce the symmetry.
In fact, usual formulations of quantum theory have this symmetry reduction built in, by providing complete information about the observables.
Consider, for example, a scalar wavefunction $\psi$ in the one-dimensional space, satisfying the Schr\"odinger equation
\begin{equation}
\label{eq:schrod-one-dim-scalar}
i\hbar\pdv{t}\psi(x,t)=\[-\frac{\hbar^2}{2m}\pdv[2]{x}+V(x)\]\psi(x,t).
\end{equation}

In the PW formalism, the two Hilbert spaces are $\hilbert_c\cong\hilbert_r\cong L^2(\R)$, the two Hamiltonians are $\obs{H}_c$ and $\obs{H}_r$, and the allowed timeless states have the form
\begin{equation}
\label{eq:PW-one-dim-scalar}
\begin{cases}
\obs{H}=-i\hbar\pdv{\tau}\otimes\obs{I}_r +\obs{I}_c\otimes\[\frac{\widehat{p}_x^2}{2m}+V(\widehat{x})\]\\
\kett{\Psi}=\int_\R\ket{\tau}_c\ket{\psi(\tau)}_r\dd\tau\\
\end{cases}
\end{equation}
where $\widehat{x}$ and $\widehat{p}_x=-i\hbar\pdv{x}$ are the position and momentum observables, and $\ket{\psi(\tau)}_r=\int_\R\psi(x,\tau)\ket{x}_r\dd x$.

The Maximal Ambiguity Theorem~\ref{thm:maximal-ambiguity} implies that this system is unitarily equivalent to a PW system with any other Hamiltonian $\obs{H}_r$ on the Hilbert space $L^2(\R)$, and any solution of the Schr\"odinger equation for $\obs{H}_r$.

Moreover, since all infinite-dimensional separable Hilbert spaces are isomorphic, Theorem~\ref{thm:maximal-ambiguity} implies that any solution $\psi(x,\tau)$ of equation~\eqref{eq:schrod-one-dim-scalar} is unitarily equivalent to any solution $\psi'(x_1,\ldots,x_{dn},\tau)$ of any non-relativistic Schr\"odinger equation for any number $n$ of particles and any space dimension $d$, as in the PW system
\begin{equation}
\label{eq:PW-d-dim-scalar}
\begin{cases}
\obs{H}=-i\hbar\pdv{\tau}\otimes\obs{I}_r\\
\phantom{\obs{H}=}+\obs{I}_c\otimes\[-\sum_{j=1}^n\frac{\hbar^2}{2m_j}\nabla_j^2+V(\obs{r}_1,\ldots,\obs{r}_n)\] \\
\kett{\Psi}=\int_\R\ket{\tau}_c\ket{\psi(\tau)}_r\dd\tau,\\
\end{cases}
\end{equation}
where $\obs{r}_j=\(\widehat{x}_{j1},\ldots,\widehat{x}_{jd}\)$ and $\nabla_j^2=\sum_{k=1}^d\pdv[2]{x_{jk}}$.

This may seem hard to grasp only because we often underappreciate the already surprising fact that all infinite-dimensional separable Hilbert spaces are isomorphic.

To resolve the ambiguity, we can require the transformations of the PW system~\eqref{eq:PW-one-dim-scalar} to preserve additional structures,
\begin{equation}
\label{eq:PW-one-dim-scalar-plus-explicit}
\begin{cases}
\obs{H}=-i\hbar\pdv{\tau}\otimes\obs{I}_r +\obs{I}_c\otimes\[\frac{\widehat{p}_x^2}{2m}+V(\widehat{x})\]\\
\kett{\Psi}=\int_\R\ket{\tau}_c\ket{\psi(\tau)}_r\dd\tau\\
\(\obs{H}_c\otimes\obs{I}_r,\obs{T}_c\otimes\obs{I}_r;\obs{I}_c\otimes\widehat{x},\obs{I}_c\otimes\widehat{p}_x\) \\
\end{cases}
\end{equation}

The PW system~\eqref{eq:PW-one-dim-scalar-plus-explicit} was written perhaps too explicitly as a decomposition $c+r$, so let us replace the local operators $\obs{H}_c$, $\obs{T}_c$, $\widehat{x}$ and $\widehat{p}_x$ with global operators $\obs{C}$, $\obs{T}$, $\obs{X}$ and $\obs{P}$ on $\hilbert$,
\begin{equation}
\label{eq:PW-one-dim-scalar-plus}
\begin{cases}
\obs{H}=\obs{C}+\frac{1}{2m}\obs{P}^2+V(\obs{X}) \\
[\obs{T},\obs{C}]=[\obs{X},\obs{P}]=i\hbar,\\
[\obs{T},\obs{X}]=
[\obs{T},\obs{P}]=
[\obs{C},\obs{X}]=
[\obs{C},\obs{P}]=
0.\\
\end{cases}
\end{equation}

Let us verify that the upgraded PW system allows a complete recovery of the initial quantum theory.
First, the observables $\obs{C}$ and $\obs{T}$ determine an algebra of observables $\mc{A}_c$, consisting of all observables of the form $f\(\obs{T},\obs{C}\)$, where $f:\R^2\to\R$.
Similarly, the observables $\obs{X}$ and $\obs{P}$ determine another algebra of observables $\mc{A}_r$.
Any pair of observables $\obs{A}\in\mc{A}_c$ and $\obs{B}\in\mc{A}_r$ commute, and together, the two algebras generate the full algebra of observables on $\hilbert$.
Then, based on the main result from~\cite{ZanardiLidarLloyd2004QuantumTensorProductStructuresAreObservableInduced}, the two pairs of observables $\(\obs{T},\obs{C}\)$ and $\(\obs{X},\obs{P}\)$ uniquely induce a tensor product structure $\hilbert=\hilbert_c\otimes\hilbert_r$ (modulo the usual domain and regularity assumptions in the infinite-dimensional case).
Then, we can recover $\obs{H}_c$, $\obs{H}_r$, $\obs{T}_c$, $\widehat{x}$, and $\widehat{p}_x$ from their TPS-neutral counterparts by 
$\obs{H}_c\otimes\obs{I}_r=\obs{C}$, $\obs{I}_c\otimes\obs{H}_r=\obs{H}-\obs{C}$, $\obs{T}_c\otimes\obs{I}_r=\obs{T}$, $\obs{I}_c\otimes\widehat{x}=\obs{X}$ and $\obs{I}_c\otimes\widehat{p}_x=\obs{P}$.
By choosing an eigenbasis $\(\ket{x}_r\)_{x\in\R}$ of $\widehat{x}$, we can also recover the wavefunction
\begin{equation}
\label{eq:one-dim-scalar-wavefunction}
\psi(x,\tau)=\braket{x}{\psi(\tau)}_r=\bra{x}\brakett{\tau}{\Psi}=\braakett{\tau,x}{\Psi},
\end{equation}
which encodes in its patterns everything about its world.

The world's Hamiltonian in terms of positions is recovered as $\matrixel{x}{\obs{H}_r}{\psi(\tau)}_r=\[-\frac{\hbar^2}{2m}\pdv[2]{x}+V(x)\]\psi(x,\tau)$, as it appears in equation~\eqref{eq:schrod-one-dim-scalar}.

More generally, we can describe a system of $n$ particles in $d$ space dimensions with $\obs{X}_{jk}:=\obs{I}_c\otimes\widehat{x}_{jk}$ and $\obs{P}_{jk}:=\obs{I}_c\otimes\widehat{p}_{jk}$ by specifying
\begin{equation}
\label{eq:PW-d-dim-scalar-plus}
\begin{cases}
\obs{H}=\obs{C}-\sum_{j=1}^n\frac{\hbar^2}{2m_j}\nabla_j^2+V(\obs{R}_1,\ldots,\obs{R}_n) \\
[\obs{T},\obs{C}]=i\hbar,[\obs{X}_{jk},\obs{P}_{j'k'}]=i\hbar\delta_{jj'}\delta_{kk'},\\
[\obs{T},\obs{X}_{jk}]=
[\obs{T},\obs{P}_{jk}]=
[\obs{C},\obs{X}_{jk}]=
[\obs{C},\obs{P}_{jk}]=
0,\\
\end{cases}
\end{equation}
where $\obs{R}_j=\(\obs{X}_{j1},\ldots,\obs{X}_{jd}\)$ and $\nabla_j^2=\sum_{k=1}^d \frac{\partial^2}{\partial x_{jk}^2}$ in the joint position representation.

By requiring the symmetry transformations of $\hilbert$ to preserve the additional structures, we reduce the symmetry.
Then, symmetry transformations can be interpreted as usual, by making sure to keep track of the correspondence between operators and the physical properties they represent, and to  transform covariantly the operators.
For example, for the system~\eqref{eq:PW-one-dim-scalar-plus}, the wavefunction remains invariant, because $\braket{x}{\psi(\tau)}_r=\braket{e^{-i\obs{W}_r}x}{e^{-i\obs{W}_r}\psi(\tau)}_r$.
The square amplitude $\abs{\psi(x,\tau)}^2=\expval{\widehat{x}}{\psi(\tau)}_r$ is just the conditional expectation value $E(\obs{I}_c\otimes\widehat{x}|\tau)$ of the stationary observable $\obs{I}_c\otimes\widehat{x}$ given that the clock reads $\tau$, defined in~\cite{PageWootters1983EvolutionWithoutEvolution} as in equation~\eqref{eq:exp-val}.

In the PW system~\eqref{eq:PW-one-dim-scalar-plus}, the vector $\kett{\Psi}$ is like any other null eigenvector of $\obs{H}$. 
As a bare unit vector, it does not contain any useful information about the world. The information is extracted from the relation between this vector, the clock, and the position operator $\obs{I}_c\otimes\widehat{x}$. The position operator $\widehat{x}$ admits as position eigenvectors the vectors $\ket{x}$, for all $x\in\R$. These vectors form a basis of $\hilbert_r$ and give the components of $\ket{\psi(\tau)}_r$, \ie the very wavefunction $\psi(x,\tau)=\braket{x}{\psi(\tau)}$. 

In one dimension, the position observable is complete, in the sense that its eigenvectors can be used to distinguish any two states. In three dimensions and $n$ scalar particles, the position operators $\widehat{x}_{j1},\widehat{x}_{j2},\widehat{x}_{j3}$, for $j\in\{1,\ldots,n\}$ form a \emph{complete set of commuting observables}. Their common eigenvectors are able to extract the full information about the state of the system.
However, they are not sufficient to also recover the Hamiltonian $\obs{H}_r$ and the observables of interest.
But if we also include in our set of predefined observables their canonical conjugates, the momentum observables, we obtain a \emph{complete set of generating observables}, \ie a set of observables that generate the algebra $\mc{A}_r$.
In general, any quantum theory needs such a complete set of generating observables, to be able to interpret physically the abstract unit vector $\ket{\psi}$, the observables, and the Hamiltonian.
Then, together with the clock observables $\obs{C}$ and $\obs{T}$ that generate the algebra $\mc{A}_c$, we have a complete set of generating observables for the algebra $\mc{A}=\mc{A}_c\otimes\mc{A}_r$.

Our treatment of scalar particle examples in non-relativistic quantum theory can be easily extended to any quantum theory, by selecting a set of observables that completely generates the algebra of observables on $\hilbert$, allowing the decomposition into clock system and rest of the world and the recovery of the information encoded in $\ket{\psi}_r$.

Therefore, when we specify a quantum system, we also need to specify the operators representing the physical properties of interest, and if we apply a transformation, we need to keep track of this correspondence, by covariantly transforming the operators.

\begin{remark}
\label{rem:nothing-new}
This is, in fact, precisely what has always been done in quantum theory.
Ambiguity arose from ignoring this correspondence, even if this was motivated by the hope that the physical meaning of observables must emerge somehow just from the minimal PW system.
\end{remark}

\section{Ambiguity and symmetry}
\label{s:physical-ambiguity}

Sometimes what looks like ambiguous descriptions may be just equivalent descriptions from different perspectives. Their existence is allowed and even required by the spacetime symmetries.
Spacetime symmetries do not allow the maximal ambiguity shown by Theorem~\ref{thm:maximal-ambiguity}, since the spacetime symmetry group is much smaller than the unitary group of the total Hilbert space, but the same symmetries are also incompatible with the complete absence of ambiguity.

Already in the example of a scalar particle in one dimension, we can see that the position observable $\widehat{x}$ should not be absolute: we should be allowed to choose a different position observable $\widehat{x}'=\widehat{x}-x_0\obs{I}_r$, where $x_0\in\R$, corresponding to a different origin of the one-dimensional space. Both should be understood to represent the same physical property, position.
Similarly, we should be able to use $\widehat{x}'=-\widehat{x}$ and $\widehat{p}'=-\widehat{p}$ as position and momentum.

This freedom is even greater for a scalar particle in three dimensions.
Its wavefunction is an element of $L^2(\R^3)$, and the position observables $\widehat{x}_{j1}$, $\widehat{x}_{j2}$, and $\widehat{x}_{j3}$ decompose this space as a tensor product of the form $L^2(\R^3)\cong L^2(\R)_x\otimes L^2(\R)_y \otimes L^2(\R)_z$. But this tensor product decomposition depends on the directions of the three axes of a particular reference frame. This is in addition to choosing a different origin of the reference frame.

Moreover, in another reference frame in relative motion with respect to the first one, the origin of the old reference frame changes linearly in time, so its position observables will appear to be time-dependent, $\widehat{x}\mapsto\widehat{x}-v\tau\obs{I}_r$~\cite{Weinberg2015LecturesOnQuantumMechanics}.
The position eigenvectors will appear to depend on $\tau$ like $\ket{x}\mapsto e^{-\frac{i}{\hbar}(m v x-\frac{1}{2} m v^2\tau)}\ket{x-v_x\tau}$.
This time-dependence of the position observable can be compensated in the new reference frame by using a different tensor product decomposition $\hilbert_{c'}\otimes\hilbert_{r'}$ which is not equivalent to $\hilbert_{c}\otimes\hilbert_{r}$ under local unitary transformations. One may expect this to happen only in Special Relativity, where space and time are ``mixed'' by Poincar\'e transformations. But already in Galilean relativity, the two reference frames that move relative to each other see inequivalent decompositions $\hilbert_c\otimes\hilbert_r$ and $\hilbert_{c'}\otimes\hilbert_{r'}$, because the Galilei transformations act on the four-dimensional Galilean spacetime $\R^4$, and not separately on space and time, and different reference frames correspond to different decompositions into space and time $\R^4\cong\R^3\times\R$.
The space slices are the same in all inertial reference frames (except for having different and relatively time-dependent origins), but the time directions transverse to these slices in four dimensions differ~\cite{Cartan1923ConnexionAffine,Kunzle1972GalileiSpacetime,Stoica2016DegenerateMetricsAndTheirApplicationsToSpacetime}.

In Special Relativity the decomposition $\hilbert\cong\hilbert_c\otimes\hilbert_r$ is also relative, because the decomposition $\R^4\cong\R^3\times\R$ is relative, and both the time direction and space subspaces are different in different reference frames.
If we consider diffeomorphism invariance, as when dealing with curved spacetime, more ambiguity is required.
In fact, this already happens in Minkowski spacetime if we consider non-inertial reference frames, as the Unruh effect demonstrates~\cite{Unruh1976UnruhEffect}.

Therefore, such an ``ambiguity'' is allowed and even required, arising from changes of reference frame under spacetime isometries. This is why not only is there no unique decomposition $\hilbert_c\otimes\hilbert_r$, but it is even undesirable to remove it completely.

But we should not conclude from this that all clock ambiguities are due to changing the reference frame or the coordinate system.
The spacetime symmetry group is much smaller than the full unitary group of $\hilbert$.
And, in Section~\ref{s:no-accept-ambiguity}, we will see that physical observations reject most of the clock ambiguities.

Therefore, any elimination of the ambiguity has to be \emph{up to physical symmetries}, including spacetime symmetries and, in some formulations of quantum theory, gauge symmetries.

\section{The origin of the clock ambiguity}
\label{s:ambiguity-origin}

We have seen that keeping track of the physical meanings of the observables removes the ambiguity in the PW formalism up to spacetime symmetries.
Then where did the clock ambiguity come from?
It came precisely from ignoring the physical meanings of the observables.
A central reason why Albrecht and Iglesias discuss the clock ambiguity was to eliminate a predefined notion of space, sometimes seen as unnecessary baggage~\cite{Wheeler1983LawWithoutLaw,Wheeler1990InformationPhysicsQuantumLinks,CarrollSingh2019MadDogEverettianism,Carroll2021RealityAsAVectorInHilbertSpace}, and to regard space and locality as emergent and approximate.

\begin{remark}
\label{rem:AI-emergent}
Albrecht~\cite{Albrecht1995TheoryOfEeverythingVsTheoryOfAnything} did not present the clock ambiguity as a rejection of the PW formalism, he rather accepted it and explored the possibility that it could explain physical phenomena like the emergence of classicality, proposing a zero-perturbation inflationary scenario.
Albrecht and Iglesias returned to this theme in~\cite{AlbrechtIglesias2008ClockAmbiguityAndTheEmergenceOfPhysicalLaws} (titled \emph{``Clock ambiguity and the emergence of physical laws''}):
\begin{quote}
the clock ambiguity is a fundamental characteristic of physical laws, which forces us to regard other crucial properties of the physical world such as space, locality, gravity, gauge symmetries, and cosmology as emergent and approximate.
\end{quote}

In~\cite{AlbrechtIglesias2012ClockAmbiguityImplicationsNewDevelopments}, they proposed that the clock ambiguity offers
\begin{quote}
the possibility of predicting the dimensionality of space, and also relate the second derivative of the density of states to the heat capacity of the Universe.
\end{quote}
\end{remark}

We can summarize the hypothesis they explored as
\begin{hypothesis}
\label{hypothesis:emergence}
Space, locality, gravity, gauge symmetries, and other physical structures or properties should not be assumed to be fundamental, but should be derived as emergent properties of the physical world from equation $\obs{H}\kett{\Psi}=0$.
\end{hypothesis}

Hypothesis~\ref{hypothesis:emergence} is the root of the clock ambiguity, which Albrecht and Iglesias tried to interpret as giving just enough freedom to allow for the emergence of the physical laws and various physical structures like space.

As I understand their view, one cannot presume that position observables (or whatever operators define space) have a fundamental existence, because they should emerge from considerations of locality and classicality.
This can be contrasted with standard presentations of quantum mechanics~\cite{vonNeumann1955MathFoundationsQM,Dirac1958ThePrinciplesOfQuantumMechanics}, which specify what operators represent positions. The possible eigenvalues of the position operators define the position configuration space, on which the wavefunction $\psi(x,\tau)=\braket{x}{\psi(\tau)}_r$ propagates. The position configuration space, along with the interaction term in the Hamiltonian, \eg the potential, encodes the dimension of space and locality. But if we get rid of these specifications, we need a way to recover space, locality, and all other physical structures.

Could it be that the physical meaning associated with the operators is redundant, once we know the relational properties of the observables?
Maybe the spectral properties of the operators are sufficient to decode their physical meanings.
For example, maybe an operator with the spectrum $\R$ can only be a position operator in another reference frame.
This is not enough, as we can see from the existence of similar operators of the form $\obs{S}\(\obs{I}_c\otimes\widehat{x}\)\obs{S}^\dagger$, including the momentum operator, but we may hope that, by also taking into account the relations between various operators, including the Hamiltonian, we can get a complete description of reality. An example of such a relation is given by the Hamiltonian from equation~\eqref{eq:PW-one-dim-scalar-plus}, which stands in a particular relation to the momentum and position operators: $\obs{H}_c$ is a momentum operator that generates translations on the $\tau$-line, and $\obs{H}_r=H_r(\widehat{x},\widehat{p}_x)$.
The relations characterizing our world are more complicated, but maybe all of them emerge from just the PW system, as Hypothesis~\ref{hypothesis:emergence} proposes.

The problem is that if we have a set of such operators, for example position, momentum, and Hamiltonian operators satisfying a relation like $\obs{H}_r=H_r(\widehat{x},\widehat{p}_x)$, there are infinitely many other sets of operators in the exact same relations.
They are given by unitary transformations $\obs{S}_r$ of $\hilbert_r$, and have the general form $\widehat{x}'=\obs{S}_r\widehat{x}\obs{S}_r^\dagger$, $\widehat{p}_x'=\obs{S}_r\widehat{p}_x\obs{S}_r^\dagger$, and $\obs{H}_r'=\obs{S}_r\obs{H}_r\obs{S}_r^\dagger$. Then, the functional dependence of $\obs{H}'$ in terms of $\widehat{x}'$, and $\widehat{p}'$ remains unchanged,  $\obs{H}'=H_r(\widehat{x}',\widehat{p}')$.
Also, the commutation relations $[\widehat{x},\widehat{p}]=i\hbar$ remain unchanged, $[\widehat{x}',\widehat{p}']=i\hbar$~\cite{Stoica2022SpaceThePreferredBasisCannotUniquelyEmergeFromTheQuantumStructure}.
But they cannot all be accounted for as the same operators seen in different reference frames, because the spacetime symmetry group is much smaller than the unitary group, even if we limit ourselves to those unitary transformations that commute with the Hamiltonian.

One may think that it is irrelevant whether we choose $\obs{H}_r$, $\widehat{x}$, and $\widehat{p}$, or $\obs{H}_r'$, $\widehat{x}'$, and $\widehat{p}'$, because they satisfy the same relations, so they must also describe one and the same reality.
But in relation to the state vector, these choices lead to different physical situations, for example different distributions of the wavefunction on the position configuration space, so the choice of operators used to represent the physical properties matters.
This is why in Section~\ref{s:physical-meaning} we fixed the correspondence between operators and their physical meanings.

\begin{remark}
\label{rem:not-change-of-basis}
To avoid possible confusions, I have to insist a bit on what we're discussing here.
The point is that there are different ways to choose the operators that represent physical properties, and the way to show this is by starting with some choice, and use unitary symmetries beyond the spacetime and gauge symmetries to obtain different choices satisfying the same relations, but that are not physically equivalent.
I am not talking about representing a particular physical situation using a different basis in $\hilbert$. Of course, to use the same example of a scalar particle in one space dimension, if we know the operators $\obs{H}$, $\widehat{x}$, and $\widehat{p}$, and the state vector $\ket{\psi}$, and we change from the position basis $(\ket{x})_{x\in\R}$ to another basis $(\obs{S}\ket{x})_{x\in\R}$, in the new basis they will have a different form, so that $\braket{\obs{S}x}{\obs{S}\psi}=\braket{x}{\psi}$, but here I am not talking about changing the basis.
Also I do not mean that, when we construct mathematically the state space $\hilbert$, but prior to representing the state as a vector in $\hilbert$, it matters how we choose the position basis, or the position operator.
In such a situation all choices are equally good, as long as once a choice made, we consistently use the unit vector to represent our state in function of the choice of the operators. For example, once we know the basis $(\ket{x})_{x\in\R}$, we can choose the unit vector $\ket{\psi}$ to represent a particular wavefunction $\psi(x)$ as $\ket{\psi}=\int_R\psi(x)\ket{x}\dd x$. Obviously $\ket{\psi}$ depends on the way we chose the position basis in the exact way needed to compensate for the choice.
But here I am only talking about the fact that there are more ways to choose the operators to represent the same physical properties while maintaining their relations.
If we forget what operators represent our physical properties, we cannot recover them, as all other choices will lead to different physical states, represented by different wavefunctions.
Similarly, if we want these operators, or structures like space, to emerge, rather than to be given from start as we did in equation~\eqref{eq:PW-one-dim-scalar-plus}, we will obtain too many solutions, corresponding to different physical situations, but all equally justified by the same relations.
This is one of the ways in which the ambiguity manifests.
\end{remark}

One should note that the expectation that operators do not have a predefined physical meaning, and that this meaning should follow only from the relational properties, is fueled by modern relational and structuralist views on physics: everything about the world should be describable in terms of relations and reducible to relations~\cite{sep-structural-realism,Worrall1989StructuralRealism,Wallace2022StatingStructuralRealismMathematicsFirstApproachesToPhysicsAndMetaphysics}, and even by physicalism, which claims that the physical substance contributes only by relational properties, not by its intrinsic nature~\cite{Carroll2021ConsciousnessAndTheLawsOfPhysics}.
It makes perfect sense, for instance, to assume that two Newtonian billiard balls universes are identical if the positions, momenta, and equations describing the collisions of the balls, are the same, regardless of the material substrate making the balls in the two universes. In one of these universes the balls may be made of steel, and in the other they may be cheese balls, but as long as their interactions follow the same equations, from a relational/structuralist point of view the two universes embody the same physics.
Similarly, it should be all the same whether the wavefunction is a complex wave in an infinite-dimensional ``ocean'' of jelly or of some other substrate, as long as it satisfies the same Schr\"odinger equation.
If we accept this independence of the intrinsic nature of the material substrate, it is natural to expect that from the relations alone we can derive everything there is to know about the world.
In particular, this motivates the hypothesis that, if all relational properties are encoded in the bare PW system, this should contain everything we observe in our universe, even our own sentient experiences~\cite{Carroll2021RealityAsAVectorInHilbertSpace,Carroll2021ConsciousnessAndTheLawsOfPhysics}.
This seems to lead to the belief that the operators and their spectral properties are enough to recover space, its dimensions, and everything else, from the bare PW system alone.

But we've seen in Section~\ref{s:ambiguity} that any two PW systems are unitarily equivalent, regardless of the number of dimensions of space, as long as the worlds' Hilbert spaces have the same dimension.
In particular, any two nonrelativistic PW systems are unitarily equivalent regardless of the dimension of space and the number and types of particles.
Even if the Hamiltonian $\obs{H}_r$ is not of the form~\eqref{eq:PW-d-dim-scalar-plus}, but of any other possible form, Theorem~\ref{thm:maximal-ambiguity} shows that any two worlds for which dimension of $\hilbert_r$ is the same are unitarily equivalent.
Therefore, the dimension of space, locality, and other physical features, cannot emerge from a bare PW system as in Definition~\ref{def:PW-system}, but they can be specified by their physical meanings, as shown in Section~\ref{s:physical-meaning}.

Albrecht and Iglesias explored the consequences of a relational stance close to Hypothesis~\ref{hypothesis:emergence}. Marletto and Vedral responded within the same general setting, by asking whether an additional relational constraint, noninteraction, removes the ambiguity. Theorems~\ref{thm:maximal-ambiguity-finite-cyclic},~\ref{thm:maximal-ambiguity}, and~\ref{thm:maximal-ambiguity-discrete} continue this comparison within that setting, but show that the ambiguity becomes maximal unless the physical meanings of observables are specified.

By showing that Hypothesis~\ref{hypothesis:emergence} leads to problems, the Maximal Ambiguity Theorem~\ref{thm:maximal-ambiguity} implies that this hypothesis should be rejected, so we should adopt a solution as in Section~\ref{s:physical-meaning}.
But one may think that the clock ambiguity problem is not a real problem and that, while there are many ways to associate physical properties to the operators, all these ways may be supported simultaneously, saving thus Hypothesis~\ref{hypothesis:emergence}. This hypothesis will be analyzed in Section~\ref{s:no-accept-ambiguity}.

\section{Can we live with the ambiguity?}
\label{s:no-accept-ambiguity}

Recall from Section~\ref{s:physical-meaning} that spacetime symmetries, and diffeomorphism invariance if we include gravity, require a certain degree of ambiguity.
This may suggest that all descriptions of the world allowed by the clock ambiguity are equally valid, being just different perspectives on one and the same reality.
One may hope that, even if there is ambiguity beyond the physical symmetries, the resulting different descriptions of the world are just different but equivalent perspectives on the same world, and are all equally valid.
In such a picture, $\kett{\Psi}$ is not a ``timeless'' state, but a ``clock-neutral'' one, a reference system-neutral state, \ie \emph{``a description of
physics prior to having chosen a temporal reference frame
relative to which the other degrees of freedom evolve''}~\cite{HohnSmithLock2021TrinityOfRelationalQuantumDynamics}.
So it may be tempting to entertain the following hypothesis:
\begin{hypothesis}
\label{hypothesis:embracing-ambiguity}
There are many different ways to choose a clock and to associate physical properties to the operators in a PW system, but all these ways are just different but equivalent perspectives on one and the same reality.
\end{hypothesis}

To test this idea, I use the account of quantum observations in a PW system based on records in the instantaneous state $\ket{\tau}_c\ket{\psi(\tau)}_r$.
Recall Kucha\v{r}'s criticism that the PW proposal does not give an answer to questions about dynamics that refer to observables at different times: \emph{``your interpretation prohibits the time to flow and the system to move!''}~\cite{Page1994ClockTimeAndEntropy,Kuchar1992TimeAndInterpretationsOfQuantumGravity}. Don Page replied that the answer is given by the original inspiration for the PW proposal, Everett's relative states, extended to include many ``now'''s as well as many worlds~\cite{Page1989TimeAsAnInaccessibleObservable},
\begin{quote}
We can never directly test what happened yesterday, but we can check the consequences that a hypothetical scenario for yesterday has on the situation today.
\end{quote}

Then, the present state contains records of the past, and since a quantum measurement can result in different records of the past measurement in the present state, this can be used to explain both the statistics of a sequence of quantum measurements, and our perception of change.
Subsequent literature sometimes saw Kucha\v{r}'s criticism as fatal to the PW proposal, and a series of refinements of the solution based on records followed: Gambini \etal~\cite{GambiniPortoPullin2004RelationalSolutionProblemTimeQuantumMechanicsQuantumGravity} used Rovelli's
evolving constants of the motion~\cite{Rovelli1991TimeInQuantumGravityAnHypothesis,Hajicek1991CommentOnTimeInQuantumGravityAnHypothesis,Rovelli1991QuantumEvolvingConstantsReplyToCommentOnTimeInQuantumGravityAnHypothesis}, Dolby gave an account based on consistent histories~\cite{Dolby2004TheConditionalProbabilityInterpretationOfTheHamiltonianConstraint}, Giovannetti \etal used von Neumann measurements and interactions between the observed system and the measurement device to derive the Born rule~\cite{GiovannettiLloydMaccone2015QuantumTime}, and H\"ohn \etal derived it relationally~\cite{HohnSmithLock2021TrinityOfRelationalQuantumDynamics}.
While each of these solutions formalized this aspect of the PW proposal based on different justifications, I think Page's reply already contains the answer: use the present records of past measurements to derive the probabilities. He invoked Everett's relative states for two reasons. The first is that Everett~\cite{Everett1973TheTheoryOfTheUniversalWaveFunction,Everett1973TheTheoryOfTheUniversalWaveFunction}, who appealed to decoherence before it was widely researched (like Bohm did a few years before~\cite{Bohm1952SuggestedInterpretationOfQuantumMechanicsInTermsOfHiddenVariables}), explains how it leads to records that distinguish the branches. At the same time, central to the PW proposal is to invoke the relative states to account for the world's state at different times similar to how Everett accounted for the branches.
Observing the present at a time $\tau$ does not collapse the timeless state $\kett{\Psi}$ into $\ket{\tau}_c\ket{\psi(\tau)}_r$, for the same reason why observing your own Everettian branch does not collapse the wavefunction to that branch. Moreover, the records of the past contained in the state vector $\ket{\psi(\tau)}_r$ condition the probabilities of those past events.
So Page not only mentioned that the present contains records of the past observables, but the probability should be conditioned both by $\ket{\tau}_c$ and by the projector corresponding to the branch containing the records with the particular values of interest.
For what follows, the reader may consider Page's account of records, one of its more elaborate and formal forms cited above, or Marletto and Vedral's account of how the past history leads to records in the present state~\cite{MarlettoVedral2017EvolutionWithoutEvolutionAndWithoutAmbiguities}.

With the idea of records in mind, consider a system containing records of events taking place in the environment, which will thereby be called ``recorder''. The recorder may even be an observer holding in her brain records of past events, including from the immediate past, for example objects and their properties she just observed in her surroundings.
Some of the records will still be valid, in the sense that the values of the properties of the environment they represent did not change.
For example, a map keeps records of the coordinates of many locations on Earth that, due to the slow changes in geography, remain valid for a long time.
With this in mind, let us express the total state as
\begin{equation}
\label{eq:memory-environment}
\ket{\tau}_c\underbrace{\ket{\tn{recorder}(\tau)}_m\ket{\tn{environment}(\tau)}_e}_{\ket{\psi(\tau)}_r},
\end{equation}
where $\ket{\tn{recorder}(\tau)}_m$ represents the state of the recorder containing records and $\ket{\tn{environment}(\tau)}_e$ represents the rest of the world, including external objects described by the records in the recorder, at time $\tau$.

According to Corollary~\ref{thm:maximal-ambiguity-world}, if we accept the clock ambiguity (in any version), the same PW system equally encodes infinitely many other states of the form
\begin{equation}
\label{eq:memory-environment-other}
\ket{\tau}_c\underbrace{\ket{\tn{recorder}(\tau)}_m\ket{\tn{environment}'(\tau)}_e}_{\ket{\psi'(\tau)}_r},
\end{equation}
where $\ket{\tn{environment}'(\tau)}_e$ can be a completely different state from $\ket{\tn{environment}(\tau)}_e$ , with objects having completely different properties than those recorded in the state $\ket{\tn{recorder}(\tau)}_m$.

Let us focus first on a single property of the environment, whose value is registered by the recorder. 
So we'll consider a PW system in which the state $\ket{\psi(\tau)}_r$ is
\begin{equation}
\label{eq:memory-environment-apple-red}
\ket{\tau}_c\ket{\tn{``the apple is red''}}_m\ket{\tn{the apple is red}}_e,
\end{equation}
and a bare PW system in which the state $\ket{\psi'(\tau)}_r$ is
\begin{equation}
\label{eq:memory-environment-apple-blue}
\ket{\tau}_c\ket{\tn{``the apple is red''}}_m\ket{\tn{the apple is blue}}_e.
\end{equation}

Since the clock ambiguity implies that we can decode from the same PW system alternative states containing different environment states, like in equation~\eqref{eq:memory-environment-apple-blue}, for all possible colors of the apple, it follows that there is no correlation between the color of the apple in the environment and its record maintained in the recorder.
Since this is true of all properties registered by the recorder, it follows that the state of our recorder is completely uncorrelated with the state of the environment.
In particular, if the recorder is an observer Alice, Alice would be unable to know basic facts about the environment, that the color of the apple is red, or even that it is an apple and not some other object, and so on.
She would know nothing about the environment, simply because she can be any copy of $\ket{\tn{recorder}(\tau)}_m$ surrounded by any environment $\ket{\tn{environment}'(\tau)}_e$ that can be decoded from the same state vector.
This is not what we observe in reality, but it is the prediction of Hypothesis~\ref{hypothesis:embracing-ambiguity}.
Therefore, the simple fact that we can know a great deal about our surroundings refutes Hypothesis~\ref{hypothesis:embracing-ambiguity}.
And this is an understatement: Hypothesis~\ref{hypothesis:embracing-ambiguity} leads straight to absurd implications that would make it impossible for us not only to do science, but even to survive and function properly for a fraction of a second. So we cannot live with the ambiguity, literally.

This implies that we cannot simply embrace the ambiguity. The way to remove it so that any transformation is allowed is by keeping track covariantly of what physical property each operator represents.
For example, if we use the ``trinity'' synthesis~\cite{HohnSmithLock2021TrinityOfRelationalQuantumDynamics}, the clock-neutral picture provided by $\kett{\Psi}$ must come with the specification of the physical meaning for a complete set of generating observables. Only then can the fully specified clock-neutral PW system be used as a Rosetta Stone to translate between various choices of the clock, provided that we track the observables covariantly in our symmetry reductions or transformations.
By failing to establish and to track covariantly this correspondence, but wanting it to emerge purely from relations, we only get the clock ambiguity noticed by Albrecht and Iglesias. One may also note that, by specifying the physical properties represented by the operators, we implicitly specify the class of physically equivalent perspectives, as in equation~\eqref{eq:PW-one-dim-scalar-plus}.
This specification reduces the symmetry so that the ambiguity flies under the radar of what we can distinguish physically.

\section{Conclusion}
\label{s:conclusion}

In this article, I reexamined the clock ambiguity problem~\cite{Albrecht1995TheoryOfEeverythingVsTheoryOfAnything,AlbrechtIglesias2008ClockAmbiguityAndTheEmergenceOfPhysicalLaws,AlbrechtIglesias2012ClockAmbiguityImplicationsNewDevelopments} in the Page-Wootters proposal~\cite{PageWootters1983EvolutionWithoutEvolution,Wootters1984TimeReplacedByQuantumCorrelations,Page1986DensityMatrixOfTheUniverse,Page1989TimeAsAnInaccessibleObservable,Page1994ClockTimeAndEntropy}.
I proved that the clock ambiguity extends to include the Hamiltonians and it is maximal for continuous and discrete ideal clocks, even if we impose the noninteraction condition.
I clarified that it is not desirable to completely remove the ambiguity, because spacetime symmetries require a definite freedom of the clock-world decomposition (Section~\ref{s:physical-ambiguity}).
However, accepting the full extent of the clock ambiguity as perspectival is a no-go too, because then nobody would be able to know anything about the external world (Section~\ref{s:no-accept-ambiguity}).
The way out of the ambiguity is to give up the assumptions leading to it (Hypotheses~\ref{hypothesis:emergence} and~\ref{hypothesis:embracing-ambiguity}), and to accept that there is a definite association between operators and the physical properties they represent, even if infinitely many other choices of the operators satisfy the exact same relations (Section~\ref{s:physical-meaning}).
Thus the clock ambiguity returns, extended to Hamiltonians and maximal even under the noninteraction condition, only when the PW structure is treated as a bare description, because such a description is incomplete. Once the physical meaning of the observables is included, the ambiguity is extinguished up to the physical symmetries.

\appendix

\section{Maximal ambiguity for infinite and continuous time}
\label{s:ambiguity-infinite-continuous}

\subsection{Maximal ambiguity for infinite discrete time}
\label{s:ambiguity-discrete-time-infinite}

Now I will show that any ideal discrete PW system has the clock ambiguity problem for histories and laws.

\begin{theorem}[Maximal ambiguity, discrete time]
\label{thm:maximal-ambiguity-discrete}
Any ideal discrete PW system has the clock ambiguity problem for histories and laws.
That is, for any other ideal discrete PW system with $\dim\hilbert_{r'}=\dim\hilbert_r$ and for any temporal states $\ket{\Psi(0)}\in\hilbert$ and $\ket{\Psi'(0)}\in\hilbert'$, there is a unitary operator $\obs{S}:\hilbert_c\otimes\hilbert_r\to\hilbert_{c'}\otimes\hilbert_{r'}$ so that 
\begin{enumerate}
	\item 
the histories of the two PW systems are unitarily equivalent, \ie for any $\tau\in\Z$,
\begin{equation}
\label{eq:PW-unitarily-equivalent-state-discrete}
\ket{\Psi'(\tau)}=\obs{S}\ket{\Psi(\tau)}
\end{equation}
	\item 
and the two evolution laws are unitarily equivalent
\begin{equation}
\label{eq:PW-unitarily-equivalent-H-discrete}
\obs{U}'=\obs{S}\obs{U}\obs{S}^{-1}.
\end{equation}
\end{enumerate}
\end{theorem}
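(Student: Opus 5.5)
The plan is to copy the proof of Theorem~\ref{thm:maximal-ambiguity-finite-cyclic}, replacing $\Z_n$ with $\Z$. For $\Z$, the clock is $\hilbert_c\cong\ell^2(\Z)$ with orthonormal basis $(\ket{\tau}_c)_{\tau\in\Z}$, and $\obs{U}_c$ is the bilateral shift. Because time no longer wraps around, the cyclicity condition $\obs{U}_r^n=\obs{I}_r$ plays no role: $\obs{U}_r$ and $\obs{U}_{r'}$ are arbitrary unitaries on $\hilbert_r$ and $\hilbert_{r'}$.

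\textbf{Step 1 (adapted bases).} Write $\ket{\Psi(0)}=\ket{0}_c\ket{\psi(0)}_r$ and $\ket{\Psi'(0)}=\ket{0}_{c'}\ket{\psi'(0)}_{r'}$. Since $\dim\hilbert_r=\dim\hilbert_{r'}$ and both spaces are separable, we can choose orthonormal bases $(\ket{e_k}_r)_k$ of $\hilbert_r$ and $(\ket{v_k}_{r'})_k$ of $\hilbert_{r'}$, indexed by the same countable set, with $\ket{e_1}_r=\ket{\psi(0)}_r$ and $\ket{v_1}_{r'}=\ket{\psi'(0)}_{r'}$. For $\tau\in\Z$ we then define
\begin{equation*}
\ket{\tau,k}:=\ket{\tau}_c\obs{U}_r^\tau\ket{e_k}_r,\qquad \ket{\tau,k}':=\ket{\tau}_{c'}\obs{U}_{r'}^\tau\ket{v_k}_{r'},
\end{equation*}
where negative powers are the powers of the inverse unitary.

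\textbf{Step 2 (orthonormal bases of the total spaces).} Orthonormality follows because clock states at different times are orthogonal, and at equal times $\obs{U}_r^\tau$ is unitary, so $\braket{\tau,k}{\tau',k'}=\delta_{\tau\tau'}\delta_{kk'}$. Completeness is the one point that needs care with infinitely many terms, so I will state it explicitly. For fixed $\tau$, the vectors $\obs{U}_r^\tau\ket{e_k}_r$ form an orthonormal basis of $\hilbert_r$, hence they span the fibre $\ket{\tau}_c\otimes\hilbert_r$. Since $\hilbert=\bigoplus_{\tau\in\Z}\ket{\tau}_c\otimes\hilbert_r$, the family $(\ket{\tau,k})_{\tau,k}$ is an orthonormal basis of $\hilbert$, and the same argument applies to the primed family. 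The operator $\obs{S}$ defined by $\obs{S}\ket{\tau,k}=\ket{\tau,k}'$ therefore maps an orthonormal basis onto an orthonormal basis, so it extends by linearity and continuity to a unitary.

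\textbf{Step 3 (intertwining and histories).} By construction, $\obs{U}=\obs{U}_c\otimes\obs{U}_r$ sends $\ket{\tau,k}$ to $\ket{\tau+1,k}$, and $\obs{U}'$ sends $\ket{\tau,k}'$ to $\ket{\tau+1,k}'$. Hence $\obs{S}\obs{U}$ and $\obs{U}'\obs{S}$ agree on a basis. Both are bounded, so they are equal, which gives $\obs{U}'=\obs{S}\obs{U}\obs{S}^{-1}$. Taking $k=1$, the basis vector $\ket{\tau,1}$ is exactly $\ket{\Psi(\tau)}=\obs{U}^\tau\ket{\Psi(0)}$, and likewise $\ket{\tau,1}'=\ket{\Psi'(\tau)}$. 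This yields equation~\eqref{eq:PW-unitarily-equivalent-state-discrete} for every $\tau\in\Z$, including negative $\tau$ through $\obs{U}^{-1}$.

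\textbf{Main obstacle.} I expect the only genuinely delicate point to be the passage from finite to countably infinite index sets: the completeness argument in Step 2 and the extension of $\obs{S}$ to all of $\hilbert$ by boundedness. This is handled by the Hilbert direct sum over $\tau$ described there.

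The following spectral remark is a consistency check, not part of the proof. The bilateral shift has purely absolutely continuous spectrum equal to the whole unit circle, with Lebesgue measure class. Hence $\obs{U}$ is unitarily equivalent to $\obs{U}_c\otimes\obs{I}_r$, independently of $\obs{U}_r$. This is the multiplicative analogue of the spectral washing in the alternative proof of Theorem~\ref{thm:maximal-ambiguity}.
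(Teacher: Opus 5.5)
Your proof is correct and is essentially the paper's. The paper's alternative proof sends $\ket{\tau,k}=\obs{U}(\tau,0)\ket{0}_c\ket{e_k}_r$ to the trivial system's $\ket{\tau}_{c'}\ket{v_k}_{r'}$ and composes two such maps, which yields exactly your $\obs{S}\ket{\tau,k}=\ket{\tau,k}'$. The paper also phrases this as $\obs{U}$ being a bilateral shift of multiplicity $\dim\hilbert_r$, via wandering subspaces, and notes it extends to $\tau$-dependent $\obs{U}_r$. Your choice $\ket{e_1}_r=\ket{\psi(0)}_r$, $\ket{v_1}_{r'}=\ket{\psi'(0)}_{r'}$ and your completeness check through $\hilbert=\bigoplus_{\tau\in\Z}\ket{\tau}_c\otimes\hilbert_r$ make explicit two points the paper leaves implicit.
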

\begin{proof}[Informal proof]
Let $\obs{U}=\obs{U}_c\otimes\obs{U}_r$ and $\obs{U}'=\obs{U}_{c'}\otimes\obs{U}_{r'}$ be the evolution operators of the two PW systems.
It may seem that the fact that $\obs{U}_r$ and $\obs{U}_{r'}$ can have different spectra makes the existence of a unitary equivalence as in equation~\eqref{eq:PW-unitarily-equivalent-H-discrete} impossible, but this is not the case.
Since the spectrum of $\obs{U}_c$ is $\varsigma(\obs{U}_c)=e^{-\frac{i}{\hbar}\R}=e^{-\frac{i}{\hbar}[0,2\pi)}$, the unit circle, we get that $\varsigma(\obs{U})=\varsigma(\obs{U}_c)\times\varsigma(\obs{U}_r)=\varsigma(\obs{U}_c)$, with uniform multiplicity equal to $\dim\hilbert_r$.
The homogenization of the spectrum washes out the spectrum of the unitary evolution operator $\obs{U}_r$, because $e^{-\frac{i}{\hbar}\varphi_r} e^{-\frac{i}{\hbar}\R}=e^{-\frac{i}{\hbar}\R}$ for any $\varphi_r\in\R$.
Then, since $\obs{U}$ and $\obs{U}'$ have the same spectrum, with the same spectral measure class and multiplicity, there is at least one unitary operator $\obs{S}$ as in equation~\eqref{eq:PW-unitarily-equivalent-H-discrete}.
Moreover, from equation~\eqref{eq:PW-WDW-discrete}, $\kett{\Psi}$ is an eigenvector of $\obs{U}$ with eigenvalue $1$, and so is $\kett{\Psi'}$ for $\obs{U}'$.
Therefore, we can choose $\obs{S}$ to satisfy both conditions~\eqref{eq:PW-unitarily-equivalent-H-discrete} and~\eqref{eq:PW-unitarily-equivalent-state-discrete}.
\end{proof}

This explains less formally why no spectral obstruction remains. The following proof makes the shift structure explicit and proves the theorem, even when $\obs{U}_r$ is $\tau$-dependent. In the $\tau$-dependent case, $\obs{U}'$ is no longer of the form $\obs{U}_{c'}\otimes\obs{U}_{r'}$, but of the form $\obs{U}'=\sum_{\tau\in\Z}\dyad{\tau}_{c'}\otimes\obs{U}_{r'}(\tau+1,\tau)$.
\begin{proof}
For  $\tau$-dependent unitary evolution operators, we can no longer use the notation $\obs{U}^\tau$, we will have to use explicitly $\obs{U}(\tau_2,\tau_1)$ and $\obs{U}_r(\tau_2,\tau_1)$.
Let $\(\ket{e_1}_r, \ket{e_2}_r,\ldots\)$ be an orthonormal basis of $\hilbert_r$.
For each $\tau$ and $k$, define
\begin{equation}
\label{eq:PW-tk-shift}
\ket{\tau,k}:=\obs{U}(\tau,0)\ket{0}_c\ket{e_k}_r=\ket{\tau}_c \obs{U}_r(\tau,0)\ket{e_k}_r.
\end{equation}

Then, for all $k,k'$ and all $\tau,\tau'\in\Z$,
\begin{equation}
\label{eq:PW-tk-ortho-discrete}
\braket{\tau,k}{\tau',k'}=\delta_{kk'}\delta_{\tau\tau'}.
\end{equation}

Let $\hilbert_\tau$ be the Hilbert space spanned by the vectors $\ket{\tau,k}$, for fixed $\tau$ and all $k$.
Then, from equation~\eqref{eq:PW-tk-ortho-discrete}, $\hilbert_0$ is a \emph{wandering subspace}, that is, $\obs{U}(\tau,0)\hilbert_0$ and $\obs{U}(\tau',0)\hilbert_0$ are orthogonal whenever $\tau\neq\tau'$, and $\hilbert=\bigoplus_{\tau\in\Z}\obs{U}(\tau,0)\hilbert_0$ (see~\cite{NagyFoiasBercoviciKerchy2010HarmonicAnalysisOfOperatorsOnHilbertSpace} page 2).
Then, as shown for example in~\cite{NagyFoiasBercoviciKerchy2010HarmonicAnalysisOfOperatorsOnHilbertSpace} page 5, $\obs{U}$ is a bilateral shift on $\hilbert$, and it is therefore determined only by the dimension of $\hilbert_r$ up to unitary equivalence.
Therefore, all ideal discrete PW systems with the same dimension are unitarily equivalent to each other and have the clock ambiguity problem for histories and laws.
\end{proof}

Let us see this more explicitly and self-contained, avoiding references to wandering subspaces~\cite{NagyFoiasBercoviciKerchy2010HarmonicAnalysisOfOperatorsOnHilbertSpace}.

\begin{proof}[Alternative proof]
Let $\hilbert_k$ be the Hilbert space spanned by the vectors $\ket{\tau,k}$ defined in equation~\eqref{eq:PW-tk-shift}, for all $\tau\in\Z$ and fixed $k$. Then, from equations~\eqref{eq:PW-tk-shift} and~\eqref{eq:PW-tk-ortho-discrete} we get that 
the restriction of the total time evolution operator $\obs{U}(\tau,0)$ to $\hilbert_k$ coincides with a shift with $\tau\mapsto\tau+1$. 

Consider a \emph{trivial ideal discrete PW system}, \ie a PW system $\hilbert'=\hilbert_{c'}\otimes\hilbert_{r'}$ with unitary evolution $\obs{U}_{r'}=\obs{I}_{r'}$,
\begin{equation}
\label{eq:total-space-shift}
\obs{U}'\ket{\tau}_{c'}\ket{v_k}_{r'}=\ket{\tau+1}_{c'}\ket{v_k}_{r'}.
\end{equation}

We will show that our PW system is unitarily equivalent to the trivial one.
We define the transformation
\begin{equation}
\label{eq:unitary-equiv-trivial-shift}
\obs{S}\ket{\tau,k}=\ket{\tau}_{c'}\ket{v_k}_{r'},
\end{equation}
where $\(\ket{v_k}_{r'}\)_k$ is an orthonormal basis of $\hilbert_{r'}$.
Then, from equations~\eqref{eq:total-space-shift} and~\eqref{eq:unitary-equiv-trivial-shift}, $\obs{U}'\obs{S}=\obs{S}\obs{U}$.
From this, the total evolution operator $\obs{U}$ is unitarily equivalent to $\obs{U'}$ on $\hilbert'$ from equation~\eqref{eq:total-space-shift}.

Moreover, $\obs{S}$ maps each history $(\obs{U}(\tau,0)\ket{0}_c\ket{e_k}_r)_{\tau\in\Z}$ of the first discrete PW system into a history $(\obs{U}_{c'}^{\tau} \ket{0}_{c'}\ket{v_k}_{r'})_{\tau\in\Z}$ of the trivial discrete PW system, as seen in equation~\eqref{eq:unitary-equiv-trivial-shift}.
Therefore, any ideal discrete PW system is unitarily equivalent to a discrete trivial PW system, and so any two ideal discrete PW systems with the same dimension are unitarily equivalent.
\end{proof}


\end{document}